\newtheorem*{rep@theorem}{\rep@title}
\newcommand{\newreptheorem}[2]{%
	\newenvironment{rep#1}[1]{%
		\def\rep@title{#2 \ref{##1}}%
		\begin{rep@theorem}}%
		{\end{rep@theorem}}}
\newcommand{\mydriver}{hypertex}
\renewcommand{\mydriver}{pdftex}
\theoremstyle{plain}
\newtheorem{theorem}{Theorem}[section]
\newtheorem{fact}[theorem]{Fact}
\newtheorem{lemma}[theorem]{Lemma}
\newtheorem{corollary}[theorem]{Corollary}
\newtheorem{definition}[theorem]{Definition}
\newcommand{\la}{\lambda}
\newcommand{\kk}{k}
\newenvironment{pfof}[1]{\begin{proof}[\textbf{Proof of #1: }]}{\end{proof}}
\theoremstyle{definition}
\title{Incremental Exact Min-Cut in Poly-logarithmic \\ Amortized Update Time}
\author{Gramoz Goranci\footnote{University of Vienna, Faculty of Computer Science, Vienna, Austria. E-mail: \texttt{gramoz.goranci@univie.ac.at}}
	\and
	Monika Henzinger\footnote{University of Vienna, Faculty of Computer Science, Vienna, Austria. E-mail: \texttt{monika.henzinger@univie.ac.at}}
	\and
	Mikkel Thorup\footnote{Faculty of Computer Science, University of Copenhagen, Denmark. E-mail: \texttt{mikkel2thorup@gmail.com}}
}
\date{}
\begin{document}
\begin{titlepage}
\maketitle

\thispagestyle{empty}
\begin{abstract}
We present a deterministic incremental algorithm for \textit{exactly} maintaining the size of a minimum cut with $\widetilde{O}(1)$ amortized time per edge insertion and $O(1)$ query time. This result partially answers an open question posed by Thorup [Combinatorica 2007]. It also stays in sharp contrast to a polynomial conditional lower-bound for the fully-dynamic weighted minimum cut problem. Our algorithm is obtained by combining a recent sparsification technique of Kawarabayashi and Thorup [STOC 2015] and an exact incremental algorithm of Henzinger [J. of Algorithm 1997].

We also study space-efficient incremental algorithms for the minimum cut problem. Concretely, we show that there exists an ${O}(n\log n/\varepsilon^2)$ space Monte-Carlo algorithm that can process a stream of edge insertions starting from an empty graph, and with high probability, the algorithm maintains a $(1+\varepsilon)$-approximation to the minimum cut. The algorithm has $\widetilde{O}(1)$ amortized update-time and constant query-time.
\end{abstract}

\end{titlepage}
\section{Introduction}
Computing a minimum cut of a graph is a fundamental algorithmic graph problem. While most of the focus has been on designing static efficient algorithms for finding a minimum cut, the dynamic maintenance of a minimum cut has also attracted increasing attention over the last two decades. The motivation for studying the dynamic setting is apparent, as real-life networks such as social or road network undergo constant and rapid changes.

Given an initial graph $G$, the goal of a dynamic graph algorithm is to build a data-structure that maintains $G$ and supports update and query operations. Depending on the types of update operations we allow, dynamic algorithms are classified into three main categories: (i) \emph{fully dynamic}, if update operations consist of both edge insertions and deletions, (ii) \emph{incremental}, if update operations consist of edge insertions only and (iii) \emph{decremental}, if update operations consist of edge deletions only. In this paper, we study incremental algorithms for maintaining the size of a minimum cut of an unweighted, undirected graph (denoted by $\lambda(G) = \lambda$) supporting the following operations:
\begin{itemize}
\item \textsc{Insert}$(u,v)$: Insert the edge $(u,v)$ in $G$.
\item \textsc{QuerySize}: Return the exact (approximate) size of a minimum cut of the current $G$.
\end{itemize}
For any $\alpha \geq 1$, we say that an algorithm is an $\alpha$-approximation of $\lambda$ if \textsc{QuerySize} returns a positive number $k$ such that $\lambda \leq k \leq \alpha \cdot \lambda$. Our problem is characterized by two time measures; \emph{query time}, which denotes the time needed to answer each query and \emph{total update time}, which denotes the time needed to process \emph{all} edge insertions. We say that an algorithm has an $O(t(n))$ amortized update time if it takes $O(m(t(n)))$ total update time for $m$ edge insertions starting from an empty graph. We use $\widetilde{O}(\cdot)$ to hide poly-logarithmic factors. 
\paragraph*{Related Work.} For over a decade, the best known static and deterministic algorithm for computing a minimum cut was due to Gabow~\cite{Gabow95} which runs in $O(m + \lambda^{2} \log n)$ time. Recently, Kawarabayashi and Thorup~\cite{thorup} devised a $\widetilde{O}(m)$ time algorithm which applies only to simple, unweighted and undirected graphs. Randomized Monte Carlo algorithms in the context of static minimum cut were initiated by Karger~\cite{Karger99}. The best known randomized algorithm is due to Karger~\cite{kargermin} and runs in $O(m \log^{3} n)$ time.  

Karger~\cite{Karger94} was the first to study the dynamic maintenance of a minimum cut in its full generality. He devised a fully dynamic, albeit randomized, algorithm for maintaining a  $\sqrt{1+2/\varepsilon}$-approximation of the minimum cut in $\widetilde{O}(n^{1/2 + \varepsilon})$ expected amortized time per edge operation. In the incremental setting, he showed that the update time for the same approximation ratio can be further improved to $\widetilde{O}(n^{\varepsilon})$. Thorup and Karger~\cite{thorupkarger} improved upon the above guarantees by achieving an approximation factor of $\sqrt{2+o(1)}$ and an $\widetilde{O}(1)$ expected amortized time per edge operation.

Henzinger~\cite{henzinger97} obtained the following guarantees for the incremental minimum cut; for any $\varepsilon \in (0,1]$, (i) an $O(1/\varepsilon^{2})$ amortized update-time for a $(2+\varepsilon)$-approximation, (ii) an $O(\log^{3} n / \varepsilon^{2})$ expected amortized update-time for a $(1+\varepsilon)$-approximation and (iii) an $O(\lambda \log n)$ amortized update-time for the exact minimum cut. 

For minimum cut up to some poly-logarithmic size, Thorup~\cite{fullythorup} gave a fully dynamic  Monte-Carlo algorithm for maintaining exact minimum cut in $\widetilde{O}(\sqrt{n})$ time per edge operation. He also showed how to obtain an $1+o(1)$-approximation of an arbitrary sized minimum cut with the same time bounds. In comparison to previous results, it is worth pointing out that his work achieves \textit{worst-case} update times.

Lacki and Sankowski~\cite{LackiS11} studied the dynamic maintenance of the exact size of the minimum cut in planar graphs with arbitrary edge weights. They obtained a fully dynamic algorithm with $\widetilde{O}(n^{5/6})$ worst-case query and update time.

There has been a growing interest in proving conditional lower bounds for dynamic problems in the last few years~\cite{abboud, henzinger15}. A recent result of Nanongkai and Saranurak~\cite{DS16} shows the following conditional lower-bound for the \emph{exact weighted} minimum cut assuming the Online Matrix-Vector Multiplication conjecture: for any $\varepsilon > 0$, there are no fully-dynamic algorithms with polynomial-time preprocessing that can simultaneously achieve $O(n^{1-\varepsilon})$ update-time and $O(n^{2-\varepsilon})$ query-time. 
\paragraph*{Our Results and Technical Overview.} We present two new incremental algorithms concerning the maintenance of the size of a minimum cut. Both algorithms apply to undirected, unweighted graphs. 
Our first and main result, presented in Section \ref{sec: exactMinCut}, shows that there is a deterministic incremental algorithm for \textit{exactly} maintaining the size of a minimum cut with $\widetilde{O}(1)$ amortized time per operation and $O(1)$ query time. This result allows us to  partially answer in the affirmative a question regarding efficient dynamic algorithms for exact minimum cut posed by Thorup~\cite{fullythorup}. Additionally, it also stays in sharp contrast to the polynomial conditional lower-bound for the fully-dynamic weighted minimum cut problem of ~\cite{DS16}. 

We obtain our result by heavily relying on a recent sparsification technique developed in the context of static minimum cut algorithms. Specifically, for a given simple graph $G$, Kawarabayashi and Thorup~\cite{thorup} designed an $\widetilde{O}(m)$ procedure that contracts vertex sets of $G$ and produces a multigraph $H$ with considerably fewer vertices and edges while preserving some family of cuts of size up to $(3/2)\lambda(G)$. Motivated by the properties of $H$, we crucially observe that it is ``safe'' to work entirely with graph $H$ as long as the sequence of newly inserted edges do not increase the size of the minimum cut in $H$ by more than $(3/2) \lambda(G)$. If the latter occurs, we recompute a new multigraph $H$ for the current graph $G$. Since $\lambda(G) \leq n$, the number of such re-computations is $O(\log n)$. For maintaining the minimum-cut of $H$, we appeal to the exact incremental algorithm due to Henzinger~\cite{henzinger97}. Though the combination of this two algorithms might seem immediate at first sight, it is not alone sufficient for achieving the claimed bounds. Our main contribution is to overcome some technical obstacles and formally argue that such combination indeed leads to our desirable guarantees.

Motivated by the recent work on \textit{space-efficient} dynamic algorithms~\cite{sayan, GibbKKT15}, we also study the efficient maintenance of the size of a minimum cut using only $\widetilde{O}(n)$ space.
Concretely, we present a ${O}(n\log n / \varepsilon^2)$ space Monte-Carlo algorithm that can process a stream of edge insertions starting from an empty graph, and with high probability, it maintains an $(1+\varepsilon)$-approximation to the minimum cut in ${O}(\alpha(n) \log^3 n /\varepsilon^2)$ amortized update-time and constant query-time. Note that none of the existing streaming algorithms for $(1+\varepsilon)$-approximate minimum cut~\cite{AhnG09,KelnerL13,AhnGM12} achieves these update and query times.

\section{Preliminary}

Let $G = (V,E)$ be an undirected, unweighted multi-graph with no self-loops. Two vertices $x$ and $y$ are $k$-\textit{edge connected} if there exist $k$ edge-disjoint paths connecting $x$ and $y$. A graph $G$ is $k$-\textit{edge connected} if every pair of vertices is $k$-edge connected. The \textit{local edge connectivity} $\lambda(G,x,y)$ of vertices $x$ and $y$ is the largest $k$ such that $x$ and $y$ are $k$-edge connected in $G$. The \textit{edge connectivity} $\lambda(G)$ of $G$ is the largest $k$ such that $G$ is $k$-edge connected.  

For a subset $S \subseteq V$ in $G$, the \textit{edge cut} $E_G(S, V \setminus S)$ is a set of edges that have one endpoint in $S$ and the other in $ V \setminus S$. We may omit the subscript when clear from the context. Let $\lambda(S,G) = |E_G(S, V \setminus S)|$ be the size of the edge cut. If $S$ is a singleton, we refer to such cut as a \textit{trivial} cut. Two vertices $x$ and $y$ are \textit{separated} by $E(S, V \setminus S)$ if they do not belong to the same connected component induced by the edge cut. A \textit{minimum edge cut} of $x$ and $y$ is a cut of minimum size among all cuts separating $x$ and $y$. A \textit{global minimum cut} $\lambda(G)$ for $G$ is the minimum edge cut over all pairs of vertices. By Menger's Theorem \cite{menger}, (a) the size of the minimum edge cut separating $x$ and $y$ is $\lambda(x,y,G)$, and (b) the size of the global minimum cut is equal to $\lambda(G)$.

Let $n$, $m_0$ and $m_1$ be the number of vertices, initial edges and inserted edges, respectively. The total number of edges $m$ is the sum of the initial and inserted edges. Moreover, let $\lambda$ and $\delta$ denote the size of the global minimum cut and the minimum degree in the final graph, respectively. Note that the minimum degree is always an upper bound on the edge connectivity, i.e., $\lambda \leq \delta$ and $m = m_0 + m_1 = \Omega{(\delta n)}$. 

A subset $U \subseteq V$ is \textit{contracted} if all vertices in $U$ are identified with some element of $U$ and all edges between them are discarded. For $G=(V,E)$ and a collection of vertex sets, let $H=(V_H,E_H)$ denote the graph obtained by contracting such vertex sets. Such contractions are associated with a mapping $h : V \rightarrow V_H$. For an edge subset $N \subseteq E$, let $N_h= \{(h(a),h(b)) : (a,b) \in N\} \subseteq E_H$ be its corresponding edge subset induced by $h$. 


\section{Sparse certificates}  \label{sec: sparseCertificates}
In this section we review a useful sparsification tool, introduced by Nagamochi and Ibaraki~\cite{NagamochiI92}.
\begin{definition}[\cite{BenczurK15}] A \emph{sparse $k$-connectivity certificate}, or simply a \emph{$k$-certificate}, for an unweighted graph $G$ with $n$ vertices is a subgraph $G'$ of $G$ such that 
\begin{enumerate}
\item  $G'$ consists of at most $k(n-1)$ edges, and 
\item  $G'$ contains  all edges crossing cuts of size at most $k$.
\end{enumerate} \label{sparsedef}
\end{definition}

Given an undirected graph $G = (V,E)$, a \textit{maximal spanning forest decomposition (msfd)} $\mathcal{F}$ of order $k$ is a decomposition of $G$ into $k$ edge-disjoint spanning forests $F_i$, $1\leq i \leq k$, such that $F_i$ is a maximal spanning forest of $G \setminus (F_1 \cup F_2 \ldots \cup F_{i-1})$. Note that $G_k = (V, \bigcup_{i \leq k} F_i)$ is a $k$-certificate. An msfd fulfills the following property whose proof we defer to the appendix.

\begin{lemma}[\cite{NagamochiI}] \label{lemm: Nagamochi}
Let $\mathcal{F}=(F_1,\ldots,F_m)$ be an \emph{msfd} of order $m$ of a graph $G=(V,E)$, and let $k$ be an integer with $1 \leq k \leq m$. Then for any nonempty and proper subset $S \subset V$,
\[
	\lambda(S,G_k) \begin{cases}
    \geq k,& \text{if } \lambda(S,G) \geq k\\
    = \lambda(S,G)  & \text{if } \lambda(S,G) \leq k-1.
\end{cases}
\]
\end{lemma}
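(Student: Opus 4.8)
The plan is to analyze the two cases separately, using the defining property of a maximal spanning forest decomposition: each $F_i$ is a maximal spanning forest of $G \setminus (F_1 \cup \dots \cup F_{i-1})$, so whenever two endpoints of a discarded edge lie in the same tree of $F_i$, they are connected in $F_i$. First I would record the easy inequality $\lambda(S, G_k) \le \lambda(S, G)$, which holds for every $S$ simply because $G_k$ is a subgraph of $G$; this already handles the ``$\le$'' direction needed in both cases, so the work is entirely in proving the lower bounds.

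For the second case, suppose $\lambda(S,G) \le k-1$. I want to show $G_k$ retains \emph{every} edge of the cut $E_G(S, V\setminus S)$, which gives $\lambda(S,G_k) \ge \lambda(S,G)$ and hence equality. Suppose for contradiction that some edge $e = (a,b)$ crossing the cut is not in $G_k$, i.e. $e \notin F_1 \cup \dots \cup F_k$. Then for each $i \le k$, when $F_i$ was built, $e$ was available (it was not in any earlier forest) but was not added, so $a$ and $b$ must already be connected inside $F_i$; since $e$ crosses the cut $(S,V\setminus S)$, that connecting path in $F_i$ must itself use at least one edge of $E_G(S,V\setminus S)$, and these edges are distinct across the forests $F_1,\dots,F_k$ because the forests are edge-disjoint. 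This exhibits $k$ distinct cut edges other than $e$, so $\lambda(S,G) \ge k+1 > k-1$, a contradiction. Hence all cut edges survive and $\lambda(S, G_k) = \lambda(S,G)$.

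For the first case, suppose $\lambda(S,G) \ge k$; I want $\lambda(S, G_k) \ge k$. The argument is similar but now I need to count the contribution of each $F_i$ to the cut $E_{G_k}(S, V\setminus S)$ and show it is at least one per forest. Since $\lambda(S,G) \ge k \ge i$, the cut $E_G(S, V\setminus S)$ is nonempty; pick any crossing edge $e=(a,b)$. If $e \in F_i$ we are done for that forest. Otherwise $e$ was discarded when $F_i$ was constructed, meaning $F_i$ already connected $a$ and $b$ at that point, and as before the connecting path in $F_i$ must cross the cut, contributing an edge of $E_G(S,V\setminus S) \cap F_i$ to $E_{G_k}(S, V\setminus S)$. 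Either way $E_{G_k}(S, V\setminus S) \cap F_i \ne \emptyset$ for each $i = 1, \dots, k$, and since the $F_i$ are edge-disjoint these contributions are distinct, giving $\lambda(S, G_k) \ge k$.

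The main obstacle, and the place to be careful, is the recurring claim that ``if $F_i$ connects the endpoints $a,b$ of a cut edge, then the connecting path in $F_i$ uses a cut edge, and these path-edges are genuinely distinct across the forests.'' The first part is just the observation that a path from one side of a cut to the other must cross it; the distinctness is immediate from edge-disjointness of the $F_i$. I would make sure the maximality property is invoked in exactly the right form — namely that an edge $(a,b)$ of $G \setminus (F_1 \cup \dots \cup F_{i-1})$ omitted from the \emph{maximal} forest $F_i$ forces $a$ and $b$ into the same tree of $F_i$ — since this is the single structural fact driving both cases.
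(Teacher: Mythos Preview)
Your Case~2 argument is correct and is essentially the paper's argument. The problem is in Case~1. You fix a crossing edge $e=(a,b)$ and, for each $i\le k$, argue that either $e\in F_i$ or ``$e$ was discarded when $F_i$ was constructed,'' forcing $a,b$ to lie in the same tree of $F_i$. But maximality of $F_i$ yields this only when $e$ is \emph{available} to $F_i$, i.e.\ when $e\in G\setminus(F_1\cup\dots\cup F_{i-1})$. If $e$ already sits in some $F_j$ with $j<i$, it is not available to $F_i$ and you learn nothing about $F_i$ from $e$. Worse, the target statement ``each $F_i$ with $i\le k$ contains a cut edge'' is simply false: take $G=K_4$, let $F_1$ be the star centered at a vertex $v$, and set $S=\{v\}$; then all three cut edges lie in $F_1$, so $F_2$ contains none, yet $\lambda(S,G)=3\ge 2$. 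In this example vertex $v$ is isolated in $F_2$, so your ``connecting path in $F_i$'' does not exist.

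The repair is exactly the move you already made in Case~2, and it is what the paper does: if $\lambda(S,G_k)=\lambda(S,G)$ you are done; otherwise there is a crossing edge $e\notin G_k$, hence $e\in G\setminus(F_1\cup\dots\cup F_{i-1})$ for \emph{every} $i\le k$, and now maximality genuinely applies and produces a distinct cut edge in each $F_i$, giving $\lambda(S,G_k)\ge k$. So the fix is a one-line change in the choice of $e$, but as written Case~1 does not go through.
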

As $G_k$ is a subgraph of $G$, $\lambda(G_k) \leq \lambda(G)$. This implies that $\lambda(G_k) = \min(k,\lambda(G))$. 

Nagamochi and Ibaraki~\cite{NagamochiI92} presented an $O(m+n)$ time algorithm to construct a special msfd, which we refer to as DA-msfd.

 
\section{Incremental Exact Minimum Cut} \label{sec: exactMinCut}
In this section we present a deterministic incremental algorithm that exactly maintains $\lambda(G)$. The algorithm has an $\widetilde{O}(1)$ update-time, an $O(1)$ query time and it applies to any undirected, unweighted graph $G = (V,E)$. The result is obtained by carefully combining a recent result of Kawarabayashi and Thorup~\cite{thorup} on static min-cut and the incremental exact min-cut algorithm of Henzinger~\cite{henzinger97}. We start by describing the maintenance of non-trivial cuts, that is, cuts with at least two vertices on both sides.    
\paragraph*{Maintaining non-trivial cuts.} Kawarabayashi and Thorup~\cite{thorup} devised a near-linear time algorithm that contracts vertex sets of a simple input graph $G$ and produces a sparse multi-graph preserving all non-trivial minimum cuts of $G$. 
In the following theorem, we state a slightly generalized version of this algorithm.

\begin{theorem}[\textsc{KT-Sparsifier}~\cite{thorup}] Given an undirected, unweighted graph $G$ with $n$ vertices, $m$ edges, and min-cut $\lambda$, in $\widetilde{O}(m)$ time, we can contract vertex sets and produce a multigraph $H$ which consists of only $m_H = \widetilde{O}(m/\lambda)$ edges and $n_H = \widetilde{O}(n/\lambda)$ vertices, and which preserves all non-trivial minimum cuts along with the non-trivial cuts of size up to $(3/2) \lambda$ in $G$.
\label{SparsificationThm}
\end{theorem}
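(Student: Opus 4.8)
The plan is to run the sparsification procedure of Kawarabayashi and Thorup~\cite{thorup} almost verbatim, but with every size threshold in it scaled up by the constant $3/2$, and then to check that this rescaling affects only the hidden constants and poly-logarithmic factors. Their construction has two stages. In the first stage one reduces the number of edges: as in~\cite{thorup}, using a sparse certificate $G_k$ in the sense of Section~\ref{sec: sparseCertificates} (which by Lemma~\ref{lemm: Nagamochi} preserves every cut of size at most $k-1$ exactly), one may pass to a subgraph with $O(\lambda n)$ edges that still contains every edge crossing a cut of size at most $2\lambda$, so that no cut of size $\le (3/2)\lambda$ is disturbed; since $\lambda$ is not known in advance this is done for $O(\log n)$ geometrically growing guesses of $\lambda$, at a cost of $\widetilde{O}(m)$ time in total. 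In the second stage one runs the diffusion/low-conductance clustering of~\cite{thorup} on the sparsified graph, producing a partition of $V$ into clusters $V_1,\dots,V_t$ such that (a) $t = \widetilde{O}(n/\lambda)$ and the number of edges whose two endpoints lie in different clusters is $\widetilde{O}(m/\lambda)$, and (b) no non-trivial cut of size at most $(3/2)\lambda$ splits any cluster, i.e., for every such cut $(S,V\setminus S)$ and every $i$ one has $V_i \subseteq S$ or $V_i \subseteq V\setminus S$. Contracting each $V_i$ to a single vertex yields the multigraph $H$; by (a) it has the stated numbers of vertices and edges, and the whole procedure runs in $\widetilde{O}(m)$ time.

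The only step that needs genuine work is (b) with the constant $3/2$ in place of the constant that~\cite{thorup} uses for preserving exact min-cuts. Their argument certifies that each cluster has small conductance and large volume, and then shows that a cut separating two vertices of the same cluster would have to be too large to be near-minimum; the value $\lambda$ enters this argument only through an inequality of the shape ``cut size $\le c\cdot\lambda$'' for a fixed constant $c$, while the cluster parameters --- a conductance bound, and hence the final bounds on $t$ and on the number of inter-cluster edges --- depend only on $c$ and on the target poly-log. Re-running the analysis with $c = 3/2$ therefore goes through unchanged, multiplying $t$ and the inter-cluster edge count by at most a constant (times a poly-logarithmic) factor, which is absorbed into $\widetilde{O}(n/\lambda)$ and $\widetilde{O}(m/\lambda)$. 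I expect this constant-chasing through the analysis of~\cite{thorup} to be essentially the entire difficulty; everything else is a black-box invocation of their result.

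It remains to explain why $H$ genuinely ``preserves'' the cuts in question. Since $H$ arises from $G$ by contractions, every cut of $H$ corresponds to a cut of $G$ of exactly the same size, so no new small cut is created. Conversely, let $(S,V\setminus S)$ be a non-trivial cut of $G$ with $\lambda(S,G) \le (3/2)\lambda$. By (b), $S$ is a union of clusters, so under the contraction map $h$ the sets $h(S)$ and $h(V\setminus S)$ partition the vertex set of $H$; moreover every edge of $G$ crossing $(S,V\setminus S)$ is an inter-cluster edge, hence survives in $H$ and crosses $(h(S),h(V\setminus S))$, and conversely. Thus this cut appears in $H$ with size exactly $\lambda(S,G)$. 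In particular every non-trivial minimum cut of $G$, which has size $\lambda \le (3/2)\lambda$, is preserved, which is precisely what the theorem asserts.
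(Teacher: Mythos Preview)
The paper does not prove this theorem; it is quoted as a black-box result from Kawarabayashi and Thorup~\cite{thorup}, with the remark that it is ``a slightly generalized version'' of their algorithm. There is therefore no proof in the paper to compare your sketch against. Your write-up is a plausible high-level outline of the KT construction, but in the context of this paper no argument is needed beyond the citation.

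If the question is whether your sketch would stand on its own as a proof, the honest answer is no: the substance of~\cite{thorup} is precisely the clustering stage you summarize in one sentence (``the diffusion/low-conductance clustering \ldots\ producing a partition \ldots\ such that no non-trivial cut of size at most $(3/2)\lambda$ splits any cluster''), and that is where all the work lies. Saying that ``re-running the analysis with $c=3/2$ goes through unchanged'' is an assertion, not an argument; establishing it requires actually going through the trivial/non-trivial cut dichotomy and the PageRank-based clustering analysis of~\cite{thorup}. In fact, the $(3/2)$ threshold (relative to the minimum degree $\delta \ge \lambda$) is already present in the original KT paper, so no rescaling is needed --- the statement here is a direct consequence of their theorem rather than a genuine generalization, which is presumably why the present paper offers no proof.
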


As far as non-trivial cuts are concerned, the above theorem implies that it is safe work on $H$ instead of $G$ as long as the sequence of newly inserted edges satisfies $\lambda_H \leq (3/2) \lambda$. To incrementally maintain the correct $\lambda_H$, we apply Henzinger's algorithm~\cite{henzinger97} on top of $H$. The basic idea to verify the correctness of the solution is to compute and store all min-cuts of $H$. Clearly, a solution is correct as long as an edge insertion does not increase the size of all min-cuts. If all min-cuts have increased, a new solution is computed using information about the previous solution. We next show how to do this efficiently. 

To store all minimum edge cuts we use the \textit{cactus tree} representation by Dinitz, Karzanov and Lomonosov~\cite{dinitz}. A cactus tree of a graph $G=(V,E)$ is a weighted graph $G_c = (V_c, E_c)$ defined as follows: There is a mapping $\phi: V \rightarrow V_c$ such that:
\begin{enumerate}
\item Every node in $V$ maps to exactly one node in $V_c$ and every node in $V_c$ corresponds to a (possibly empty) subset of $V$.
\item $\phi(x) = \phi(y)$ iff $x$ and $y$ are $(\lambda(G)+1)$-edge connected.
\item Every minimum cut in $G_c$ corresponds to a min-cut in $G$, and every min-cut in $G$ corresponds to \text{at least} one min-cut in $G_c$.
\item If $\lambda$ is odd, every edge of $E_c$ has weight $\lambda$ and $G_c$ is a tree. If $\lambda$ is even, $G_c$ consists of paths and simple cycles  sharing at most one vertex, where edges that belong to a cycle have weight $\lambda / 2$ while those not belonging to a cycle have weight $\lambda$.
\end{enumerate}
Dinitz and Westbrook~\cite{DinitzW98} showed that given a cactus tree, we can use the data structures from~\cite{GalilI93, Poutre00} to maintain the cactus tree for minimum cut size $\lambda$ under $u$ insertions, reporting when the minimum cut size increases to $\lambda+1$ in $O(u+n)$ total time. 

To quickly compute and update the cactus tree representation of a given multigraph $G$, we use an algorithm due to Gabow~\cite{Gabow91}. The algorithm computes first a subgraph of $G$, called a \textit{complete $\lambda$-intersection} or $I(G,\lambda)$, with at most $\lambda n$ edges, and uses $I(G,\lambda)$ to compute the cactus tree. Given some initial graph with $m_0$ edges, the algorithm computes $I(G,\lambda)$ and the cactus tree in $\widetilde{O}(m_0 + \lambda^{2}n)$ time. Moreover, given $I(G,\lambda)$ and a sequence of edge insertions that increase the minimum cut by 1, the new $I(G,\lambda)$ and the new cactus tree can be computed in $\widetilde{O}(m')$, where $m'$ is the number of edges in the current graph (this corresponds to one execution of the Round Robin subroutine~\cite{Gabow95}). 
\paragraph*{Maintaining trivial cuts.} We remark that the multigraph $H$ from Theorem \ref{SparsificationThm} preserves only non-trivial cuts of $G$. If $\lambda = \delta$, then we also need a way to keep track of a trivial minimum cut. We achieve this by maintaining a minimum heap $\mathcal{H}_G$ on the vertices, where each vertex is stored with its degree. If an edge insertion is performed, the values of the edge endpoints are updated accordingly in the heap. It is well known that constructing $\mathcal{H}_G$ takes $O(n)$ time. The supported operations \textsc{Min($\mathcal{H}_G$)} and \textsc{UpdateEndpoints($\mathcal{H}_G$,$e$)} can be implemented in $O(1)$ and $O(\log n)$ time, respectively (see \cite{Cormen}). 

This leads to Algorithm \ref{algo: ExactMinCut}.

\def\IF{\textbf{if}~}
\def\THEN{~\textbf{then}}
\def\ENDIF{\textbf{endif}~}
\def\WHILE{\textbf{while}~}
\def\ENDWHILE{\textbf{endwhile}~}
\def\ELSE{\textbf{else}~}
\def\GOTO{\textbf{Goto}~}
\def\SPACE{\quad~~}
\begin{algorithm}
\caption{\textsc{Incremental Exact Minimum Cut}}
\begin{algorithmic}[1]
\State Compute the size $\lambda_0$ of the min-cut of $G$ and set $\lambda^* = \lambda_0$.
\Statex Build a heap $\mathcal{H}_G$ on the vertices, where each vertex stores its degree as a key.
\Statex Compute a multigraph $H$ by running \textsc{KT-sparsifier} on $G$ and a mapping $h : V \rightarrow V_H$.
\Statex Compute the size $\lambda_H$ of the min-cut of $H$, a DA-msfd $F_1, \ldots, F_m$ of order $m$ of $H$, 
\Statex $I(H,\lambda_H)$, and a cactus-tree of $\bigcup_{i \leq \lambda_H+1} F_i$.
\State Set $N_h = \emptyset$. 
\Statex \WHILE there is at least one minimum cut of size \textsc{$\lambda_H$}~\textbf{do}
\Statex \SPACE \textbf{Receive the next operation}.
\Statex \SPACE  \IF it is a query~\textbf{then} \Return $\min$\{$\lambda_H$, \textsc{Min($\mathcal{H}_G$)}\} 
\Statex \SPACE \ELSE it is the insertion of an edge $(u,v)$, \textbf{then} 
\Statex \SPACE update the cactus tree according to the insertion of the new edge $(h(u),h(v))$,
\Statex \SPACE add the edge $(h(u),h(v))$ to $N_h$ and update the degrees of $u$ and $v$ in $\mathcal{H}_G$.
\Statex \SPACE \ENDIF
\Statex  \ENDWHILE
\Statex Set $\lambda_H = \lambda_H + 1$. 
\State  \IF  $\min$\{$\lambda_H$, \textsc{Min($\mathcal{H}_G$)}\}$> (3/2) \lambda^{*}$\THEN 
\Statex \SPACE \texttt{// Full Rebuild Step}
\Statex \SPACE Compute $\lambda(G)$ and set $\lambda^{*} = {\lambda(G)}$. 
\Statex \SPACE Compute a multigraph $H$ by running \textsc{KT-sparsifier} on the current graph $G$.
\Statex \SPACE Update $\lambda_H$ to be the min-cut of $H$, compute a DA-msfd $F_1, \ldots, F_m$ of order $m$ of $H$,
\Statex \SPACE  and then $I(H, \lambda_H)$ and a cactus tree of $\bigcup_{i \leq \lambda_H+1} F_i$. 
\Statex \ELSE \IF $\lambda_H  \leq (3/2) \lambda^{*}$\THEN 
\Statex \SPACE  \SPACE // \texttt{Partial Rebuild Step}
\Statex  \SPACE \SPACE Compute a DA-msfd $F_1, \ldots, F_m$ of order $m$ of $\bigcup_{i \leq \lambda_H + 1} F_i \cup N_h$ and 
\Statex \SPACE \SPACE call the resulting forests $F_1,\ldots,F_m$.
\Statex \SPACE \SPACE Let $H' = (V_H,E')$ be a graph with $E' = I(H,\lambda_H - 1) \cup \bigcup_{i \leq \lambda_H + 1} F_i$.
\Statex \SPACE \SPACE Compute $I(H', \lambda_H)$ and a cactus tree of $H'$.
\Statex \SPACE \ELSE // \texttt{Special Step}
\Statex \SPACE \SPACE \WHILE \textsc{Min($\mathcal{H}_G$)} $\leq (3/2) \lambda^*$ ~\textbf{do}
\Statex \SPACE  \SPACE \SPACE \IF the next operation is a query~\textbf{then} \Return \textsc{Min($\mathcal{H}_G$)}
\Statex \SPACE \SPACE \SPACE \ELSE update the degrees of the edge endpoints in $\mathcal{H}_G$.
\Statex \SPACE \SPACE \SPACE \ENDIF
\Statex  \SPACE \SPACE \ENDWHILE
\Statex \SPACE\SPACE  \GOTO 3.
\Statex \SPACE \ENDIF
\Statex \ENDIF
\Statex \GOTO 2.
\end{algorithmic}
\label{algo: ExactMinCut}
\end{algorithm}

\paragraph*{Correctness.}
Let $G$ be the current graph throughout the execution of the algorithm and let $H$ be the corresponding multigraph maintained by the algorithm. Recall that $H$ preserves some family of cuts from $G$. We say that $H$ is \textit{correct} if and only if there exists a minimum cut from $G$ that is contained in the union of (a)  all trivial cuts of $G$ and (b) all cuts in $H$. Note that we consider $H$ to be correct even in the \texttt{Special Step} (i.e., when $\lambda_H > (3/2) \lambda^*$), where $H$ is not updated anymore since we are certain that the smallest trivial cut is smaller than any cut in $H$. 

To prove the correctness of the algorithm we will show that (1) it correctly maintains
a trivial min-cut at any time, (2) $H$ is correct as long as  $\min\{\textsc{Min}(\mathcal{H}_G),\lambda_H\} \leq (3/2) \lambda^{*}$
 (and when this condition fails we rebuild $H$), and (3) as long as $\lambda_H \leq (3/2) \lambda^*$, the algorithm correctly maintains  all cuts of size up to $\lambda_H + 1$ of $H$.

Let $N_h$ be the set of recently inserted edges in $H$ that the algorithm maintains during the execution of the \textbf{while} loop in Step 2.

\begin{lemma} \label{correctness1} Let $H=(V_H,E_H)$ be a multigraph with minimum cut $\lambda_H$ and let $N_h$ be a set with $N_h \subseteq E_H$. Further, let $F_1,\ldots,F_m$ be a DA-msfd of order $m \ge \lambda_H + 1$ of $H \setminus N_h$, and let $H'=(V_H, E')$ be a graph with $E' = N_h \cup \bigcup_{i \leq \lambda_H+1} F_i$. Then, a cut is a min-cut in $H'$ iff it is a min-cut in $H$. 
\end{lemma}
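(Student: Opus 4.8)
The plan is to show the two graphs $H$ and $H'$ have exactly the same cuts of size at most $\lambda_H$, and that both have minimum cut value exactly $\lambda_H$; this immediately gives the claimed equivalence of min-cuts. The key tool is Lemma~\ref{lemm: Nagamochi} applied to the graph $H \setminus N_h$ with the msfd $F_1,\ldots,F_m$ and the threshold $k = \lambda_H + 1$. Writing $(H\setminus N_h)_k = (V_H, \bigcup_{i\le k} F_i)$ for the $k$-certificate, Lemma~\ref{lemm: Nagamochi} tells us that for every nonempty proper $S\subset V_H$, the cut value $\lambda(S, (H\setminus N_h)_k)$ equals $\lambda(S, H\setminus N_h)$ whenever the latter is at most $k-1 = \lambda_H$, and is at least $k = \lambda_H+1$ otherwise.

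First I would add the edges of $N_h$ back to both sides. For any fixed $S$, adding $N_h$ increases the cut value $\lambda(S,\cdot)$ by exactly $|N_h \cap E_G(S, V_H\setminus S)|$ — the same additive amount for $H\setminus N_h \to H$ as for $(H\setminus N_h)_k \to H'$, since $H' = (H\setminus N_h)_k \cup N_h$ and $H = (H\setminus N_h)\cup N_h$. Call this quantity $c(S)\ge 0$. So it suffices to compare $\lambda(S, (H\setminus N_h)_k)$ with $\lambda(S, H\setminus N_h)$. From Lemma~\ref{lemm: Nagamochi}: if $\lambda(S, H) \le \lambda_H$, then certainly $\lambda(S, H\setminus N_h) \le \lambda(S,H) \le \lambda_H = k-1$, so the certificate preserves this cut exactly, i.e. $\lambda(S,(H\setminus N_h)_k) = \lambda(S, H\setminus N_h)$, and adding back $c(S)$ on both sides gives $\lambda(S, H') = \lambda(S, H)$. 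If instead $\lambda(S,H) \ge \lambda_H+1$, I must argue $\lambda(S, H') \ge \lambda_H + 1$ as well; this splits into the case $\lambda(S, H\setminus N_h) \ge k$, where the certificate guarantees $\lambda(S,(H\setminus N_h)_k)\ge k = \lambda_H+1$ so $\lambda(S,H')\ge \lambda_H+1$ outright, and the case $\lambda(S,H\setminus N_h)\le k-1$, where the certificate preserves the cut exactly and then $\lambda(S,H') = \lambda(S, H\setminus N_h) + c(S) = \lambda(S,H) \ge \lambda_H+1$.

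Putting this together: every cut has the same value in $H$ and $H'$ whenever that value is $\le \lambda_H$, and every cut of value $\ge \lambda_H+1$ in $H$ still has value $\ge \lambda_H+1$ in $H'$. Since $H$ has a cut of size exactly $\lambda_H$ (its minimum), that cut has size exactly $\lambda_H$ in $H'$ too, so $\lambda(H') \le \lambda_H$; and no cut of $H'$ can be smaller than $\lambda_H$, because a cut of size $<\lambda_H$ in $H'$ would be a cut of size $<\lambda_H$ in $H$ by the preservation of small cuts, contradicting minimality of $\lambda_H$ in $H$. Hence $\lambda(H') = \lambda_H$, and the sets achieving size $\lambda_H$ are exactly the same in both graphs, which is the statement. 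The main obstacle is bookkeeping the direction $\lambda(S,H) \ge \lambda_H + 1 \Rightarrow \lambda(S,H')\ge \lambda_H+1$ cleanly — in particular noticing that Lemma~\ref{lemm: Nagamochi} must be invoked with threshold $\lambda_H+1$ (not $\lambda_H$) precisely so that cuts of size exactly $\lambda_H$ fall in the ``$\le k-1$'' regime and are preserved exactly rather than merely lower-bounded; this is why the hypothesis requires $m \ge \lambda_H+1$ forests and why $H'$ is built from $\bigcup_{i\le \lambda_H+1}F_i$.
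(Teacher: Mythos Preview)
Your proposal is correct and takes essentially the same approach as the paper: both arguments decompose every cut into its $N_h$-part and its $(H\setminus N_h)$-part, apply Lemma~\ref{lemm: Nagamochi} with threshold $k=\lambda_H+1$ to the latter, and case-split on whether the $(H\setminus N_h)$-portion of the cut is small (preserved exactly) or large (lower-bounded by $\lambda_H+1$). Your organization---first classifying by $\lambda(S,H)\le\lambda_H$ versus $\ge\lambda_H+1$ and then reading off $\lambda(H')=\lambda_H$ and the coincidence of min-cuts---is slightly cleaner than the paper's two-pass argument, but the underlying ideas are identical; your closing remark about why one needs $\lambda_H+1$ forests (so that size-$\lambda_H$ cuts land in the ``$\le k-1$'' exact-preservation regime) is a nice clarification the paper leaves implicit.
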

\begin{proof} 
We first show that every non-min cut in $H$ is a non-min cut in $H'$. By contrapositive, we get that a min-cut in $H'$ is a min-cut in $H$. 

To this end, let $(S,V_H \setminus S)$ be a cut with $|E_H(S,V_H \setminus S)| \geq \lambda_H +1$ in $H$.   Define $E_H(S, V_H \setminus S) \cap N_h = S_{N_h}$ and $E_H(S, V_H \setminus S) \cap (E_H \setminus N_h)  = S_{H \setminus N_h}$ such that $E_H(S,V_H \setminus S) = S_{N_h} \uplus S_{H \setminus N_h}$ and $|E_H(S,V_H \setminus S)| = |S_{N_h}| + |S_{H \setminus N_h}|$. Letting $F' = \bigcup_{i \leq \lambda_H+1} F_i$, we similarly define edge sets $S'_{N_h}$ and $S'_{F'}$ partitioning the edges $E'(S, V_H \setminus S)$ that cross the cut $(S, V_H \setminus S)$ in $H'$. First, observe that $S_{N_h} = S'_{N_h}$ since edges of $N_h$ are always included in $H'$. In addition, by Lemma \ref{lemm: Nagamochi}, we know that $F'$ preserves all cuts of $H \setminus N_h$ up to size $\lambda_H+1$. Thus, if $|S_{H \setminus N_h}| \leq \lambda_H + 1$ (Case 1), we get that $|S_{H \setminus N_h}| = |S'_{F'}|$. It follows that $|E'(S,V_H \setminus S)| = |S'_{N_h}| + |S'_{F'}| = |S_{N_h}| + |S_{H \setminus N_h}| = |E_H(S,V_H \setminus S)| \geq \lambda_H +1$. If $|S_{H \setminus N_h}| > \lambda_H + 1$ (Case 2), then $F'$ must contain at least $\lambda_H + 1$ edges crossing such cut and thus $|S'_{F'}| \geq \lambda_H + 1$. The latter implies that $|E'(S,V_H \setminus S)| = |S'_{N_h}| + |S'_{F'}| \geq \lambda_H + 1$. In both cases, $H'$ being a subgraph of $H$ implies that $\lambda(H') \leq \lambda_H$. Thus $(S, V_H \setminus S)$ cannot be a min-cut in $H'$. 

For the other direction, 
consider a min-cut $(D,V_H \setminus D)$ of size $|E'(D, V_H \setminus D)|$ in $H'$. Let $D_{N_{h}}, D_{H \setminus N_h}, D'_{F'}, D'_{N_h}$ be defined as above. Considering the cut $(D, V_H \setminus D)$ in $H$, we know that $|E_H(D, V_H \setminus D)|= |D_{N_h}| + |D_{H \setminus N_h}| \geq \lambda_H$. We first note that $D_{N_h} = D'_{N_h}$ since edges of $N_h$ are always included in $H'$. Then, similarly as above, by Lemma \ref{lemm: Nagamochi}  we know that if $|D_{H \setminus N_h}| \leq \lambda_H + 1$, then $|E'(D, V_H \setminus D)| = |D'_{N_h}| + |D'_{F'}| = |D_{N_h}| + |D_{H \setminus N_h}| = |E_H(D, V_H \setminus D)| \geq \lambda_H$. If $|D_{H \setminus N_H}| > \lambda_H + 1$, then $F'$ must contain at least $\lambda_H + 1$ edges crossing such cut and thus $|E'(D, V_H \setminus D)| \geq \lambda_H + 1$. Combining both bounds we obtain that  $|E'(D, V_H \setminus D)| \geq \lambda_H$. Since $(D,V_H \setminus D)$ was chosen arbitrarily, we get that $\lambda(H') \geq \lambda_H$ must hold. The latter along with $\lambda(H') \leq \lambda_H$ imply that $\lambda(H') = \lambda_H$. 
\end{proof}

\begin{lemma} \label{correctness3} The algorithm correctly maintains a trivial min-cut in $G$.
\end{lemma}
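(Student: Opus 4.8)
The plan is to establish the invariant that, at every point during the execution, the heap $\mathcal{H}_G$ contains exactly the vertices of $V$, each stored with key equal to its current degree $\deg_G(v)$. Since the trivial cut $(\{v\}, V \setminus \{v\})$ has size exactly $\deg_G(v)$, this invariant immediately yields that $\textsc{Min}(\mathcal{H}_G)$ always equals $\min_{v \in V}\deg_G(v)$, i.e.\ the size of the smallest trivial cut of the current graph, which is precisely what the lemma asserts.

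I would prove the invariant by induction on the number of operations processed so far. For the base case, Step~1 builds $\mathcal{H}_G$ directly from the initial graph $G$, so the invariant holds before the first operation. For the inductive step, note that a query does not modify $G$ and hence trivially preserves the invariant, while inserting an edge $(u,v)$ increases $\deg_G(u)$ and $\deg_G(v)$ by exactly one and leaves all other degrees unchanged. Thus it suffices to verify that every branch of the algorithm that consumes an edge insertion from the input also performs \textsc{UpdateEndpoints}$(\mathcal{H}_G,(u,v))$, which by definition increments precisely the keys of $u$ and $v$ by one.

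Inspecting Algorithm~\ref{algo: ExactMinCut}, edge insertions are consumed in exactly two places: the \textbf{while} loop of Step~2, where upon receiving an insertion the algorithm updates the degrees of $u$ and $v$ in $\mathcal{H}_G$, and the inner \textbf{while} loop of the \texttt{Special Step}, where again the degrees of the edge endpoints are updated in $\mathcal{H}_G$. The \texttt{Full Rebuild} step, the \texttt{Partial Rebuild} step, and the assignment $\lambda_H \leftarrow \lambda_H + 1$ perform only internal recomputation and do not read any operation from the input stream, so no insertion can bypass the corresponding heap update. I would also remark that the rebuild steps neither reconstruct nor need to reconstruct $\mathcal{H}_G$: because the invariant is maintained continuously by the per-insertion updates, the heap stays correct across rebuilds. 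This closes the induction, and the lemma follows.

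The only genuinely delicate point is the exhaustiveness of this case analysis: one must be sure that there is no control-flow path on which an inserted edge gets recorded in $G$ (and in $N_h$ or in $H$) while the heap update is skipped. The argument rests on the syntactic observation that in Algorithm~\ref{algo: ExactMinCut} every ``\textbf{Receive the next operation}'' whose result is an insertion is, on the insertion branch, immediately followed by an update of the keys of both endpoints in $\mathcal{H}_G$.
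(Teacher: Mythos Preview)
Your proposal is correct and takes essentially the same approach as the paper, which dispatches the lemma in a single line (``This follows directly from the min-heap property of $\mathcal{H}_G$''). Your version simply spells out the invariant and the case analysis that the paper leaves implicit; nothing is missing or different in substance.
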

\begin{proof}
This follows directly from the min-heap property of $\mathcal{H}_G$.
\end{proof}

To simplify our notation, in the following we will refer to Step 1 as a \texttt{Full Rebuild Step} (namely the initial \texttt{Full Rebuild Step}).
\begin{lemma} \label{correctness2} For some current graph G, let $H$ be the maintained multi-graph of $G$ under the vertex mapping $h$ and assume that $\lambda_H \leq (3/2) \lambda^{*}$, where $\lambda^{*}$ denotes the min-cut of $G$ at the last \texttt{\em{Full Rebuild Step}}. Then the algorithm correctly maintains $\lambda_H = \lambda(H)$.
\end{lemma}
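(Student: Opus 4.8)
The plan is to prove the lemma by induction on the number of operations processed since the last \texttt{Full Rebuild Step}, tracking what invariant the algorithm maintains about $\lambda_H$ and about the auxiliary structures $I(H,\lambda_H)$ and the cactus tree of $\bigcup_{i\le\lambda_H+1}F_i$. The base case is immediate: right after a \texttt{Full Rebuild Step} (including the initial Step 1), the algorithm explicitly recomputes $\lambda_H=\lambda(H)$ from scratch together with a DA-msfd and the cactus tree, so the claim holds. For the inductive step I distinguish the two ways $\lambda_H$ can change. As long as the \textbf{while} loop of Step 2 runs, $\lambda_H$ is unchanged and, by Dinitz--Westbrook~\cite{DinitzW98} combined with~\cite{GalilI93,Poutre00}, the cactus tree is correctly updated under the insertions of the mapped edges $(h(u),h(v))$; the loop exits precisely when no cut of size $\lambda_H$ survives, i.e.\ when $\lambda(H)$ has risen to $\lambda_H+1$, which is exactly the value the algorithm then assigns.

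The core of the argument is to show that after the loop exits and $\lambda_H$ is incremented, the \texttt{Partial Rebuild Step} restores the invariant whenever the hypothesis $\lambda_H\le(3/2)\lambda^*$ of the lemma is in force. Here is where Lemma~\ref{correctness1} does the work: at the moment the loop exits, the graph $H$ (with its recently inserted edges recorded in $N_h$) has min-cut exactly $\lambda_H$ (using the new, incremented value), and $F_1,\dots,F_m$ is a DA-msfd of $H\setminus N_h$ of order $m\ge\lambda_H+1$. Hence by Lemma~\ref{correctness1} the graph $H'$ built from $N_h\cup\bigcup_{i\le\lambda_H+1}F_i$ has the same min-cuts as $H$, in particular $\lambda(H')=\lambda(H)=\lambda_H$. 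I must also check that the edge set the algorithm actually uses, namely $E'=I(H,\lambda_H-1)\cup\bigcup_{i\le\lambda_H+1}F_i$, is cut-equivalent to $H$ up to the relevant size: this follows because $I(H,\lambda_H-1)$ is a complete $(\lambda_H-1)$-intersection and hence retains all edges crossing cuts of size $\le\lambda_H-1$, which is enough for Gabow's algorithm to recompute $I(H',\lambda_H)$ and the cactus tree correctly — so the updated structures again represent all cuts of $H$ up to $\lambda_H+1$, and $\lambda_H=\lambda(H)$ as claimed. The \texttt{Special Step} does not concern us here since it is only entered when $\lambda_H>(3/2)\lambda^*$, outside the lemma's hypothesis, and the \texttt{Full Rebuild Step} is handled by the base case applied to the new $\lambda^*$.

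I expect the main obstacle to be the bookkeeping at the boundary between iterations: making precise that when Step 2's loop exits, the DA-msfd $F_1,\dots,F_m$ stored by the algorithm is indeed a DA-msfd of $H\setminus N_h$ for the \emph{current} $H$ (it was computed for an earlier snapshot, and the newly inserted edges are exactly those collected in $N_h$), so that Lemma~\ref{correctness1} applies verbatim with the incremented $\lambda_H$. A secondary subtlety is to argue that replacing $H$ by the sparse surrogate $H'$ and then by $I(H',\lambda_H)$ loses no cut of size $\le\lambda_H+1$ — this is a composition of the $k$-certificate property (Lemma~\ref{lemm: Nagamochi}) with the defining property of the complete $\lambda$-intersection, and I would spell out that the orders match ($m\ge\lambda_H+1$) so that all cuts up to size $\lambda_H+1$ are preserved at each step. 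Once these two points are nailed down, the induction closes and the lemma follows.
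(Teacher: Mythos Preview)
Your proposal is correct and follows essentially the same approach as the paper: both argue by induction anchored at the \texttt{Full Rebuild Step}, use the correctness of the incremental cactus-tree maintenance to justify incrementing $\lambda_H$, and invoke Lemma~\ref{correctness1} to certify that the \texttt{Partial Rebuild Step} preserves the min-cuts. Your version is in fact slightly more careful---the paper inducts on the number of \texttt{Partial Rebuild Steps} and simply asserts that ``the correctness of the sparsification follows from Lemma~\ref{correctness1}'' without spelling out the bookkeeping about $F_1,\dots,F_m$ being a DA-msfd of $H\setminus N_h$ or the role of $I(H,\lambda_H-1)$, both of which you rightly flag as points to verify.
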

\begin{proof}
At the time of the last \texttt{Full Rebuild Step}, the algorithm calls \textsc{KT-sparsifier} on $G$, which yields a multigraph $H$ that preserves all non-trivial min-cuts of $G$. The value of $\lambda_H$ is updated to $\lambda(H)$ and a DA-msfd and a cactus tree are constructed for $H$. The latter preserve all cuts of $H$ of size up to $\lambda_H +1$. Thus, the value of $\lambda_H$ is correct at this step.

Now suppose that the graph after the last \texttt{Full Rebuild Step} has undergone a sequence of edge insertions, which resulted in the current graph $G$ and its corresponding multigraph $H$ under the vertex mapping $h$. During these insertions, as long as $\lambda_H \leq (3/2)\lambda^*$, a sequence of $k$ \texttt{Partial Rebuild Steps} is executed, for some $k\geq 1$. Let $\lambda_H^{(i)}$ be the value of $\lambda_H$ after the $i$-th execution of \texttt{Partial Rebuild Step}, where $1\leq i \leq k$. Since, $\lambda_H^{(k)} = \lambda(H)$, it suffices to show that $\lambda_H^{(k)}$ is correct. We proceed by induction.

For the base case, we show that $\lambda_H^{(1)}$ is correct. First, using the fact that $\lambda_H$ and the cactus tree are correct at the last \texttt{Full Rebuild Step} and that the incremental cactus tree algorithm correctly tell us when to increment $\lambda_H$, we conclude that incrementing the value of $\lambda_H$ in Step 2 is valid. Thus, $\lambda_H^{(1)}$ is correct. Next, in a \texttt{Partial Rebuild Step}, the algorithm sparsifies the graph while preserving all cuts of size up to $\lambda_H^{(1)} + 1$ and producing a new cactus tree for the next insertions. The correctness of the sparsification follows from Lemma \ref{correctness1}.

For the induction step, let us assume that $\lambda_H^{(k-1)}$ is correct. Then, similarly to the base case, the correctness of $\lambda_H^{(k-1)}$, the cactus tree from the $(k-1)$-st \texttt{Partial Rebuild Step} and the correctness of the incremental cactus tree algorithm give that incrementing the value of $\lambda_H^{(k-1)}$ in Step 2 is valid and yields a correct $\lambda_H^{(k)}$.
\end{proof}
Note that when $\lambda_H > (3/2) \la^*$, the above lemma is not guaranteed to hold as the algorithm does not execute a \texttt{Partial Rebuild Step} in this case. However, we will show below that this is not necessary for the correctness of the algorithm. The fact that we do not need to execute a \texttt{Partial Rebuild Step} in this setting is crucial for achieving our time bound.

\begin{lemma} \label{correctness4} If $\min\{\textsc{Min}(\mathcal{H}_G),\lambda_H\} \leq 3/2 \lambda^{*}$, then $H$ is correct. 
\end{lemma}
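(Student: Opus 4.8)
The plan is to connect the quantity $\min\{\textsc{Min}(\mathcal{H}_G),\lambda_H\}$ to the family of cuts that $H$ is guaranteed to preserve, and then argue that this family must contain a genuine minimum cut of $G$. Recall the definition: $H$ is \emph{correct} when some minimum cut of $G$ lies in the union of (a) all trivial cuts of $G$ and (b) all cuts of $H$. So the goal is to produce a minimum cut of $G$ of one of these two types under the stated hypothesis.

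First I would split into the two cases according to which term realizes the minimum. \textbf{Case 1:} $\textsc{Min}(\mathcal{H}_G) \le \lambda_H$, so $\textsc{Min}(\mathcal{H}_G) \le (3/2)\lambda^*$ as well. By Lemma \ref{correctness3} the heap correctly stores the minimum degree $\delta$ of the current $G$, so $\textsc{Min}(\mathcal{H}_G) = \delta$, which is the size of the smallest trivial cut of $G$. If $\lambda = \delta$, then this trivial cut is a genuine minimum cut of $G$ and it lies in family (a), so $H$ is correct. If $\lambda < \delta$, then $\lambda \le \delta - 1 < \textsc{Min}(\mathcal{H}_G) \le \lambda_H$; since $\lambda$ is realized by some cut that is not trivial (any trivial cut has size $\ge \delta > \lambda$), it is a non-trivial cut of $G$ of size $\lambda < (3/2)\lambda^*$, hence — by the preservation guarantee of Theorem \ref{SparsificationThm} applied at the last \texttt{Full Rebuild Step}, where the sparsifier was built for a graph with min-cut $\lambda^*$ — it survives as a cut of $H$ of the same size, placing it in family (b).

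\textbf{Case 2:} $\lambda_H < \textsc{Min}(\mathcal{H}_G)$ and $\lambda_H \le (3/2)\lambda^*$. Here I would invoke Lemma \ref{correctness2}: since $\lambda_H \le (3/2)\lambda^*$, the algorithm has been maintaining $\lambda_H = \lambda(H)$ correctly. Let $(S, V_H\setminus S)$ be a minimum cut of $H$; pulling it back through the contraction map $h$, it induces a cut $(h^{-1}(S), V\setminus h^{-1}(S))$ of $G$ of exactly the same size $\lambda_H$, and this is a non-trivial cut of $G$ (both sides contain at least two vertices, since contracted vertex sets are non-singletons — or, more carefully, because $H$ only encodes non-trivial cuts). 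Now I need $\lambda_H = \lambda(G)$. On one hand $\lambda(G) \le \lambda_H$ because this pulled-back cut certifies it. On the other hand, $\lambda(G) = \min(\lambda, \delta$-type argument$)$: any minimum cut of $G$ is either trivial, of size $\delta = \textsc{Min}(\mathcal{H}_G) > \lambda_H$, or non-trivial of size $< (3/2)\lambda^*$ (if it were $\ge (3/2)\lambda^*$ then... here is the subtle point), and a non-trivial cut of $G$ of size $\le (3/2)\lambda^*$ is preserved in $H$ with the same size, hence has size $\ge \lambda_H$. Either way $\lambda(G) \ge \min\{\lambda_H, \delta\} = \lambda_H$. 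So $\lambda(G) = \lambda_H$ and the pulled-back cut is a genuine minimum cut of $G$ in family (b).

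The main obstacle, and the step needing the most care, is the interaction between the threshold $(3/2)\lambda^*$ of the \emph{current} sparsifier and the sizes of minimum cuts of $G$: one must verify that whenever $\min\{\textsc{Min}(\mathcal{H}_G),\lambda_H\} \le (3/2)\lambda^*$ holds, every candidate minimum cut of $G$ is either a trivial cut tracked by the heap or a non-trivial cut small enough ($\le (3/2)\lambda^*$) to be guaranteed preserved by the \textsc{KT-Sparsifier} built at the last \texttt{Full Rebuild Step}. The key inequality is that $\lambda(G) \le \min\{\textsc{Min}(\mathcal{H}_G), \lambda_H\} \le (3/2)\lambda^*$, which follows because both $\textsc{Min}(\mathcal{H}_G) = \delta \ge \lambda(G)$ and $\lambda_H$ is the size of some cut of $G$ (via $h^{-1}$) so $\lambda_H \ge \lambda(G)$ — wait, that gives the wrong direction, so instead one argues $\lambda(G) \le \delta$ always and $\lambda(G) \le \lambda_H$ always, hence $\lambda(G) \le \min\{\textsc{Min}(\mathcal{H}_G),\lambda_H\} \le (3/2)\lambda^*$; thus the true minimum cut of $G$, if non-trivial, has size $\le (3/2)\lambda^*$ and is preserved in $H$, closing the argument. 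I would present this inequality chain first to avoid the circularity, then run the two cases above.
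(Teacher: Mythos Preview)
Your argument is correct and mirrors the paper's: the paper takes the contrapositive route (any non-trivial cut of $G$ not in $H$ has size $> (3/2)\lambda^*$, while either $\lambda_H$ via Lemma~\ref{correctness2} or $\textsc{Min}(\mathcal{H}_G)$ via Lemma~\ref{correctness3} witnesses a cut of $G$ of size at most $(3/2)\lambda^*$, so the non-preserved cut cannot be minimum), whereas you establish $\lambda(G) \le (3/2)\lambda^*$ directly and then invoke the sparsifier's guarantee on the minimum cut itself. One caution on your closing inequality chain: the claim ``$\lambda(G) \le \lambda_H$ always'' implicitly needs Lemma~\ref{correctness2} (hence $\lambda_H \le (3/2)\lambda^*$) to identify $\lambda_H$ with $\lambda(H)$ and to know that $H$ is still a contraction of the \emph{current} $G$; when $\lambda_H > (3/2)\lambda^*$ (the \texttt{Special Step}) you should instead fall back on $\lambda(G) \le \textsc{Min}(\mathcal{H}_G)$, which your Case~1 already covers.
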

\begin{proof}
Let $(S',V \setminus S')$ be any non-trivial cut in $G$ that is not in $H$. Such a cut must have cardinality strictly greater than $(3/2) \lambda^{*}$ since otherwise it would be contained in $H$. We show that $(S',V \setminus S')$ cannot be a minimum cut as long as $\min\{\textsc{Min}(\mathcal{H}_G),\lambda_H\} \leq (3/2) \lambda^{*}$ holds. We distinguish two cases.
\begin{enumerate}
\item If $\lambda_H \leq (3/2) \lambda^*$, then by Lemma \ref{correctness2} the algorithm maintains $\lambda_H$ correctly. Since $H$ is obtained from $G$ by contracting vertex sets, there is a cut $(S,V_H,S)$ in $H$, and thus in $G$, of value $\lambda_H$. It follows that $(S',V \setminus S')$ cannot be a minimum cut of $G$ since $|E(S', V \setminus S')| > (3/2) \lambda^* \geq \lambda_H = \lambda(H) \geq \lambda(G)$, where the last inequality follows from the fact that $H$ is a contraction of $G$.

\item If $\textsc{Min}(\mathcal{H}_G) \leq (3/2) \lambda^{*}$, then by Lemma \ref{correctness3} there is a cut of size $\textsc{Min}(\mathcal{H}_G) = \delta$ in $G$. Similarly, $(S', V \setminus S')$ cannot be a minimum cut of $G$ since $|E(S', V \setminus S')| > (3/2) \lambda^{*} \geq \delta \geq \lambda(G)$.
\end{enumerate}
Appealing to the above cases, we conclude $H$ is correct since a min-cut of $G$ is either contained in $H$ or it is a trivial cut of $G$. 
\end{proof}

\begin{lemma} Let $G$ be some current graph. Then the algorithm correctly maintains $\lambda(G)$.
\end{lemma}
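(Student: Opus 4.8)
The plan is to prove the sharper statement that \emph{every} \textsc{QuerySize} operation returns exactly $\lambda(G)$ for the current graph $G$; the lemma is then immediate. The ingredients will be the three correctness facts already in hand — a trivial minimum cut is always tracked (Lemma~\ref{correctness3}); whenever $\lambda_H\le(3/2)\lambda^*$ the maintained value satisfies $\lambda_H=\lambda(H)$ (Lemma~\ref{correctness2}); and whenever $\min\{\textsc{Min}(\mathcal{H}_G),\lambda_H\}\le(3/2)\lambda^*$ the multigraph $H$ is correct (Lemma~\ref{correctness4}) — together with two elementary facts about contraction: every cut of $H$ maps to a cut of $G$ of the same value, so $\lambda(G)\le\lambda(H)$, and $\lambda(G)\le\delta=\textsc{Min}(\mathcal{H}_G)$. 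The key consequence I will use is: if $H$ is correct and $\lambda_H=\lambda(H)$, then since by the definition of correctness some minimum cut of $G$ is either a trivial cut (hence of size $\ge\delta$) or a cut of $H$ (hence of size $\ge\lambda(H)$), and since $\lambda(G)$ is at most both $\delta$ and $\lambda(H)$, we get $\lambda(G)=\min\{\lambda(H),\textsc{Min}(\mathcal{H}_G)\}=\min\{\lambda_H,\textsc{Min}(\mathcal{H}_G)\}$. I will then observe that queries are answered in only two places: inside the \textbf{while} loop of Step~2, returning $\min\{\lambda_H,\textsc{Min}(\mathcal{H}_G)\}$, and inside the \textbf{while} loop of the \texttt{Special Step}, returning $\textsc{Min}(\mathcal{H}_G)$.

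For queries in Step~2 I will show the two hypotheses of the key consequence hold. The equality $\lambda_H=\lambda(H)$ is the loop invariant: the loop runs exactly while $H$ still has a cut of size $\lambda_H$, and the incremental cactus-tree structure of~\cite{DinitzW98,GalilI93,Poutre00} reports the precise insertion at which the minimum cut of $H$ becomes $\lambda_H+1$, so no query is answered between that moment and the increment of $\lambda_H$. For the bound $\min\{\textsc{Min}(\mathcal{H}_G),\lambda_H\}\le(3/2)\lambda^*$ I will trace how Step~2 is entered: from Step~1, from a \texttt{Full Rebuild Step}, or from a \texttt{Partial Rebuild Step}. After a \texttt{Partial Rebuild Step} its branch condition gives $\lambda_H\le(3/2)\lambda^*$, and $\lambda_H$ is constant inside the loop, so the bound is maintained throughout. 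After a \texttt{Full Rebuild Step} we have just set $\lambda^*=\lambda(G)$ and $\lambda_H=\lambda(H)$; a minimum cut of $G$ has size $\lambda^*\le(3/2)\lambda^*$, and it is either trivial, giving $\textsc{Min}(\mathcal{H}_G)=\delta=\lambda^*$, or non-trivial, in which case Theorem~\ref{SparsificationThm} preserves it in $H$ so that $\lambda_H=\lambda(H)=\lambda^*$; using moreover that the KT-sparsifier leaves $\lambda(H)\le(3/2)\lambda^*$ (equivalently, no non-trivial cut of $G$ of size at most $(3/2)\lambda^*$ is omitted from $H$) the bound again persists as $\lambda_H$ stays constant. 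With both hypotheses in place, Lemma~\ref{correctness2} gives $\lambda_H=\lambda(H)$ and Lemma~\ref{correctness4} gives that $H$ is correct, so the returned value equals $\lambda(G)$.

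For queries in the \texttt{Special Step}, this step is entered with $\lambda_H>(3/2)\lambda^*$, and its \textbf{while} loop runs exactly while $\textsc{Min}(\mathcal{H}_G)\le(3/2)\lambda^*$, so throughout it $\min\{\textsc{Min}(\mathcal{H}_G),\lambda_H\}=\textsc{Min}(\mathcal{H}_G)\le(3/2)\lambda^*$ and Lemma~\ref{correctness4} applies: $H$ is correct. Furthermore $\lambda_H=\lambda(H)$ held when the \texttt{Special Step} began (it had just been incremented, with the cactus tree confirming the increase), and since $H$ is not updated in the \texttt{Special Step} while only edges are added to $G$, every non-trivial cut of $G$ has size at least $\lambda(H)=\lambda_H>(3/2)\lambda^*$ if it is a cut of $H$, and more than $(3/2)\lambda^*$ (by the KT-sparsifier guarantee at the last \texttt{Full Rebuild Step}, since edges only increase cut sizes) if it is not. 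As $\textsc{Min}(\mathcal{H}_G)\le(3/2)\lambda^*$, a minimum cut of $G$ is therefore trivial, so $\lambda(G)=\textsc{Min}(\mathcal{H}_G)$, the returned value. Finally, when this loop exits we have both $\textsc{Min}(\mathcal{H}_G)>(3/2)\lambda^*$ and $\lambda_H>(3/2)\lambda^*$, so the return to Step~3 triggers a \texttt{Full Rebuild Step}, which reinstates the invariants for subsequent queries (and raises $\lambda^*$ by a factor at least $3/2$).

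I expect the main obstacle to be precisely this control-flow bookkeeping: verifying that at \emph{every} query inside Step~2 one has $\min\{\textsc{Min}(\mathcal{H}_G),\lambda_H\}\le(3/2)\lambda^*$, which requires pinning down exactly how the algorithm cycles between Steps~2 and~3 (and through the \texttt{Partial}/\texttt{Full Rebuild} and \texttt{Special} branches) and, in particular, the value of $\lambda(H)$ immediately after a \texttt{Full Rebuild Step} — i.e.\ handling the possibility that the minimum cut of $G$ is trivial. Once the control flow is settled, everything else is a routine combination of Lemmas~\ref{correctness1}--\ref{correctness4} with the two contraction inequalities $\lambda(G)\le\lambda(H)$ and $\lambda(G)\le\delta$.
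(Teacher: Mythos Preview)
Your approach is essentially the paper's: combine Lemmas~\ref{correctness2}--\ref{correctness4} with the contraction inequalities $\lambda(G)\le\lambda(H)$ and $\lambda(G)\le\delta$ to get $\lambda(G)=\min\{\lambda_H,\textsc{Min}(\mathcal{H}_G)\}$. The paper organizes the case split by the value of $\min\{\lambda_H,\textsc{Min}(\mathcal{H}_G)\}$ relative to $(3/2)\lambda^*$, while you organize it by the location in the code where the query is answered; these amount to the same thing.

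There is, however, one genuine gap in your Step~2 analysis. You assert that after a \texttt{Full Rebuild Step} ``the KT-sparsifier leaves $\lambda(H)\le(3/2)\lambda^*$,'' and you need this so that the bound $\min\{\textsc{Min}(\mathcal{H}_G),\lambda_H\}\le(3/2)\lambda^*$ persists throughout the while loop (since $\textsc{Min}(\mathcal{H}_G)$ may grow under insertions). But Theorem~\ref{SparsificationThm} does not guarantee this: it only says non-trivial cuts of $G$ of size at most $(3/2)\lambda^*$ are preserved in $H$, not that such cuts exist. If every minimum cut of $G$ is trivial and every non-trivial cut of $G$ exceeds $(3/2)\lambda^*$ (e.g.\ $G=K_n$), then $\lambda_H=\lambda(H)$ can exceed $(3/2)\lambda^*$ immediately after the rebuild, and your Step~2 argument does not cover subsequent queries there. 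The paper sidesteps this by splitting on $\lambda_H$ directly: in the sub-case $\lambda_H>(3/2)\lambda^*$ with $\textsc{Min}(\mathcal{H}_G)\le(3/2)\lambda^*$, it simply observes that $\min\{\lambda_H,\textsc{Min}(\mathcal{H}_G)\}=\textsc{Min}(\mathcal{H}_G)$ and that, since $H$ is correct and its smallest cut exceeds $\textsc{Min}(\mathcal{H}_G)$, the minimum cut of $G$ must be trivial, giving $\lambda(G)=\textsc{Min}(\mathcal{H}_G)$. You already make exactly this argument for the \texttt{Special Step}; you just need to apply it inside Step~2 as well instead of relying on the unjustified bound on $\lambda(H)$.
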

\begin{proof} 
Let $G$ be some current graph and $H$ be the maintained multi-graph of $G$ under the vertex mapping $h$. 
We will argue that $\lambda(G) = \min\{\textsc{Min}(\mathcal{H}_G),\lambda_H\}$. 

If $\min\{\textsc{Min}(\mathcal{H}_G),\lambda_H\} \leq (3/2) \lambda^{*}$, then by Lemma \ref{correctness4}, $H$ is correct i.e., there exists a minimum cut of $G$ that is contained in the union of all trivial cuts of $G$ and all cuts in $H$. Lemma \ref{correctness3} guarantees that the algorithm correctly maintains $\textsc{Min}(\mathcal{H}_G)$, i.e., the trivial minimum cut of $G$. If $\lambda_H \leq (3/2) \lambda^*$, then Lemma \ref{correctness2} ensures that $\lambda_H = \lambda(H)$, and thus $\min\{\textsc{Min}(\mathcal{H}_G), \lambda_H\} = \lambda(G)$.  If, however, $\lambda_H > (3/2) \la^*$ but $\min\{\textsc{Min}(\mathcal{H}_G),\lambda_H\} \leq (3/2) \lambda^{*}$, then $\lambda_H  > \min\{\textsc{Min}(\mathcal{H}_G),\lambda_H\}$ which implies that $\min\{\textsc{Min}(\mathcal{H}_G), \lambda_H\} = \textsc{Min} (\mathcal{H}_G) = \lambda(G)$. As we argued above, the algorithm correctly maintains $\textsc{Min}(\mathcal{H}_G)$ at any time. Thus it follows that the algorithm correctly maintains $\lambda(G)$ in this case as well.

The only case that remains to consider is $\textsc{Min}(\mathcal{H}_G) > (3/2)\lambda^{*}$ and $\lambda_H > (3/2)\lambda^{*}$. But this implies that $\min\{\textsc{Min}(\mathcal{H}_G),\lambda_H\} > (3/2) \lambda^{*}$, and the algorithm computes a $H$ and $\lambda(G)$ from scratch and sets $\lambda_H$ correctly. After this full rebuild $\la(G) = \min\{\textsc{Min}(\mathcal{H}_G),\lambda_H\}$ trivially holds.
\end{proof}
 
\paragraph*{Running Time Analysis.}

\begin{theorem} Let $G$ be a simple graph with $n$ nodes and $m_0$ edges. Then the total time for inserting $m_1$ edges and maintaining a minimum edge cut of $G$ is $ \widetilde{O}(m_0 + m_1)$. If we start with an empty graph, the amortized time per edge insertion is $\widetilde{O}(1)$. The size of a minimum cut can be answered in constant time.
\end{theorem}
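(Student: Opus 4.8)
The plan is to bound the total update time by splitting it across four kinds of work: the \texttt{Full Rebuild Steps}, the \texttt{Partial Rebuild Steps}, the \texttt{Special Steps}, and the $O(1)$/$O(\log n)$ bookkeeping done on each individual operation inside the \textbf{while} loop of Step~2 (the incremental cactus update, the update of $\mathcal{H}_G$, and the append to $N_h$). Correctness is already in hand from Lemmas~\ref{correctness1}--\ref{correctness4}, so everything here is running-time accounting, and two structural facts drive it. First, each \texttt{Full Rebuild Step} at least multiplies $\lambda^*$ by $3/2$: the triggering test is $\min\{\textsc{Min}(\mathcal{H}_G),\lambda_H\}>(3/2)\lambda^*$, and the same case analysis as in Lemma~\ref{correctness4} shows the freshly recomputed $\lambda(G)$ must then exceed $(3/2)\lambda^*$; since $\lambda(G)\le\delta\le n$ throughout, there are only $O(\log n)$ \texttt{Full Rebuild Steps}, and the successive parameters $\lambda^*_1<\lambda^*_2<\cdots$ satisfy $\sum_i\lambda^*_i=O(\lambda^*_{\text{last}})=O(\lambda)$. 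Second, between two consecutive \texttt{Full Rebuild Steps} the counter $\lambda_H$ only grows, lying in $[\lambda^*_i,(3/2)\lambda^*_i+1]$ after a fresh rebuild (at which point $\lambda_H=\lambda(H)=\widetilde{O}(\lambda^*_i)$ by the construction of~\cite{thorup}, $H$ being a contraction of $G$ that preserves non-trivial min-cuts and leaves minimum-degree vertices uncontracted) and changing only by one-at-a-time increments; hence the period governed by $\lambda^*_i$ contains $O(\lambda^*_i)$ \texttt{Partial Rebuild Steps} and $O(\lambda^*_i)$ increments of $\lambda_H$. Throughout I use the parameters of Theorem~\ref{SparsificationThm}: in that period $H$ has $n_H=\widetilde{O}(n/\lambda^*_i)$ vertices, $m_H=\widetilde{O}(m/\lambda^*_i)$ edges, and $\lambda_H=\widetilde{O}(\lambda^*_i)$; and $m=\Omega(\delta n)$ gives $\lambda^*_i\,n=O(m)$.

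\textbf{Full Rebuild Steps.} One such step runs \textsc{KT-sparsifier} on the current $G$ in $\widetilde{O}(m)$, computes $\lambda(G)$ exactly in $\widetilde{O}(m)$ (e.g.\ via~\cite{thorup}), builds a DA-msfd of $H$ in $O(m_H+n_H)$, and then, via Gabow's algorithm~\cite{Gabow91}, computes $I(H,\lambda_H)$ and a cactus tree in $\widetilde{O}(m_H+\lambda_H^2 n_H)=\widetilde{O}(m/\lambda^*_i+(\lambda^*_i)^2\cdot n/\lambda^*_i)=\widetilde{O}(m+\lambda^*_i n)=\widetilde{O}(m)$. Multiplying by the $O(\log n)$ \texttt{Full Rebuild Steps} (each occurring when the graph has at most $m$ edges) yields $\widetilde{O}(m)$ in total.

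\textbf{Partial Rebuild Steps, the Step-2 loop, and Special Steps.} In a \texttt{Partial Rebuild Step} during period $i$, the graph $H'$ has $|E'|=O(\lambda_H n_H)$ edges (the complete intersection $I(H,\lambda_H-1)$ contributes $O(\lambda_H n_H)$ and each of the $\lambda_H+1$ forests at most $n_H-1$), so computing the DA-msfd and the single Round Robin execution that produces $I(H',\lambda_H)$ and the new cactus tree costs $\widetilde{O}(\lambda_H n_H)+O(|N_h|)=\widetilde{O}(\lambda^*_i\cdot n/\lambda^*_i)+O(|N_h|)=\widetilde{O}(n)+O(|N_h|)$. Since there are $O(\lambda^*_i)$ such steps in period $i$, summing over periods (using $\sum_i\lambda^*_i=O(\lambda)$ and $\sum|N_h|\le m_1$) bounds the total \texttt{Partial Rebuild} cost by $\widetilde{O}(\lambda n)+O(m_1)=\widetilde{O}(m)$. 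Inside the Step-2 \textbf{while} loop, the Dinitz--Westbrook structure~\cite{DinitzW98} maintains the cactus of $H$ under $u$ insertions, until the min-cut rises, in $O(u+n_H)$ time, i.e.\ $O(1)$ amortized per insertion plus $O(n_H)$ per increment of $\lambda_H$, hence $O(\lambda^*_i\cdot n/\lambda^*_i)=O(n)$ per period and $O(n\log n)$ overall; each insertion also touches $\mathcal{H}_G$ in $O(\log n)$ (via~\cite{Cormen}) and $N_h$ in $O(1)$, contributing $O(m_1\log n)$. A \texttt{Special Step} only updates $\mathcal{H}_G$ on each operation, again $O(m_1\log n)$ in total.

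Adding all contributions gives total update time $\widetilde{O}(m_0+m_1+n)=\widetilde{O}(m_0+m_1)$, absorbing the additive $n$ (for building $\mathcal{H}_G$ and the per-period $O(n)$ terms) since $m_0+m_1=\Omega(\delta n)=\Omega(n)$ once any edge is present. A \textsc{QuerySize} returns $\min\{\lambda_H,\textsc{Min}(\mathcal{H}_G)\}$, or $\textsc{Min}(\mathcal{H}_G)$ during a \texttt{Special Step}, each read off in $O(1)$; and starting from the empty graph $m_0=0$, so the total is $\widetilde{O}(m_1)$ for $m_1$ insertions, i.e.\ $\widetilde{O}(1)$ amortized. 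I expect the amortization to be the main obstacle: (i) arguing that a single period may contain $\Theta(\lambda^*_i)$ rebuilds each touching $\widetilde{O}(n)$ objects, yet the geometric growth of $\lambda^*$ keeps the grand total at $\widetilde{O}(\lambda n)=\widetilde{O}(m)$; and (ii) pinning down that every graph handed to Gabow's routines --- the sparsifier output $H$ and the auxiliary $H'$ --- really has only $\widetilde{O}(m/\lambda^*_i)$, resp.\ $O(\lambda^*_i\cdot n/\lambda^*_i)$, edges, which relies on Theorem~\ref{SparsificationThm} together with $\lambda_H=\widetilde{O}(\lambda^*_i)$ so that the term $\lambda_H^2 n_H$ never blows up.
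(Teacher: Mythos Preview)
Your proof is correct and follows essentially the same accounting as the paper: $O(\log n)$ \texttt{Full Rebuild Steps} each costing $\widetilde{O}(m)$, at most $O(\lambda)$ \texttt{Partial Rebuild Steps} (via the geometric growth of $\lambda^*$) each costing $\widetilde{O}(n)$ through the key identity $\lambda_H\, n_H=\widetilde{O}(n)$, plus $\widetilde{O}(1)$ per-operation bookkeeping, summing to $\widetilde{O}(m_0+m_1)$. The one place you go beyond the paper is the claim that the KT-sparsifier ``leaves minimum-degree vertices uncontracted'' to pin down $\lambda_H=O(\lambda^*_i)$ right after a rebuild; Theorem~\ref{SparsificationThm} as stated here does not assert this, but the paper's own Full-Rebuild cost bound (writing $(\lambda^*_j)^2\cdot(n/\lambda^*_j)$ for what is really $\lambda_H^2\,n_H$) relies on the same identification, so your argument is no weaker than the original on this point.
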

\begin{proof}
We first analyse Step 1. Building the heap $\mathcal{H}_G$ and computing $\lambda_0$ take $O(n)$ and $\widetilde{O}(m_0)$ time, respectively. The total running time for constructing $H$, $I(H,\lambda_H)$ and the cactus tree is dominated by $\widetilde{O}(m_0 + \lambda^{2}_0 \cdot ( n / \lambda_0)) = \widetilde{O}(m_0)$. Thus, the total time for Step 1 is $\widetilde{O}(m_0)$.

Let $\lambda_H^0, \ldots, \lambda_H^f$ be the values that $\lambda_H$ assumes in Step 2  during the execution of the algorithm in increasing order. We define \text{Phase} $i$ to be all steps executed after Step $1$ while $\lambda_H = \lambda_H^{i}$, excluding Full Rebuild Steps and Special Steps. Additionally, let $\lambda^{*}_0, \ldots, \lambda^{*}_{O(\log n)}$ be the values that $\lambda^{*}$ assumes during the algorithm. 
We define \textit{Superphase} $j$ to consist of the $j$-th \texttt{Full Rebuild Step} along with all steps executed while $\min\{\textsc{Min}(\mathcal{H}_G), \lambda_H\} \leq (3/2) \lambda^{*}_j$, where $\lambda^{*}_j$ is the value of $\lambda(G)$ at the \texttt{Full Rebuild Step}. Note that a superphase consists of a sequence of phases and potentially a final \texttt{Special Step}. Moreover, the algorithm runs a phase if $\lambda_H \leq (3/2) \lambda^{*}$. 

We say that $\lambda_H^i$ \textit{belongs} to superphase $j$, if the $i$-th phase is executed during superphase $j$ and $\lambda_H^i\leq (3/2) \lambda_j^{*}$. We remark that the number of vertices in $H$ changes only at the beginning of a superphase, and remains unchanged during its lifespan.

Let $n_j$ denote the number of vertices in some superphase $j$. We bound this quantity as follows:
\begin{fact} \label{fact}
Let $j$ be a superphase during the execution of the algorithm. Then, we have
\[
	n_j = \widetilde{O}(n / \lambda_H^i), \text{ for all } \lambda_H^i \text{ belonging to superphase } j.
\]
\end{fact}
\begin{proof}
From Step 3 we know that $n_j = \widetilde{O}(n / \lambda^{*}_j)$. Moreover, observe that $\lambda_j^{*} \leq \lambda_H^i$ and a phase is executed whenever $\lambda_H^i \leq (3/2) \lambda_j^{*}$. Thus, for all $\lambda_H^i$'s belonging to superphase $j$, we get the following relation
\begin{equation}
\label{MinCutRelation}
	\lambda^{*}_j \leq \lambda_H^i \leq (3/2) \lambda^{*}_j,
\end{equation}
which in turn implies that $n_j = \widetilde{O}(n / \lambda^{*}_j) = \widetilde{O}(n / \lambda_H^i)$.
\end{proof}

For the remaining steps, we divide the running time analysis into two parts (one part corresponding to phases, and the other to superphases). 

\paragraph*{Part $1$.}For some superphase $j$, the $i$-th phase consists of the $i$-th execution of a \texttt{Partial Rebuild Step} followed by the execution of Step 2. Let $u_i$ be the number of edge insertions in Phase $i$. The total time for Step 2 is $O(n_j+u_i \log n) = \widetilde{O}(n + u_i)$. Using Fact $9$, we observe that $\bigcup_{i \leq \lambda_H +1}F_i \cup N_h$ has size $O(u_{i-1} + \lambda^{i}_H n_j) = \widetilde{O}(u_{i-1} + n)$. Thus, the total time for computing DA-msfd in a \texttt{Partial Rebuild Step} is $\widetilde{O}(u_{i-1} + n)$. Similarly, since $H'$ has $O(\lambda_H^{i} n_j) = \widetilde{O}(n)$ edges, it takes $\widetilde{O}(n)$ time to compute $I(H',\lambda_{H}^i)$ and the new cactus tree.

The total time spent in Phase $i$ is $\widetilde{O}(u_{i-1} + u_{i} + n)$. Let $\lambda$ and $\lambda_H$ denote the size of the minimum cut in the final graph and its corresponding multigraph, respectively. Note that $\sum_{i=1}^{\lambda} u_i \leq m_1$, $\lambda n \leq m_0 + m_1$ and recall Eqn. (\ref{MinCutRelation}).  This gives that the total work over all phases is
\[
    \sum_{i = 1}^{\lambda_H} \widetilde{O}\left(u_{i-1} + u_{i} + n\right) = \sum_{i = 1}^{\lambda} \widetilde{O}\left(u_{i-1} + u_{i} + n\right) = \widetilde{O}(m_0 + m_1).
\]

\paragraph*{Part $2$.}The $j$-th superphase consists of the $j$-th execution of a \texttt{Full Rebuild Step} along with a possible execution of a \texttt{Special Step}, depending on whether the condition is met. In a \texttt{Full Rebuild Step}, the total running time for constructing $H$, $I(H,\lambda^{*}_j)$ and the cactus tree is dominated by $\widetilde{O}(m_0 + m_1 + (\lambda^{*}_j)^{2} \cdot (n / \lambda^{*}_j)) = \widetilde{O}(m_0 + m_1)$. The running time of a Special Step is $\widetilde{O}(m_1)$.

Throughout its execution, the algorithm begins a new superphase whenever $\lambda(G) =\min$ $\{\textsc{Min}(\mathcal{H}_G), \lambda_H\} > (3/2)\lambda^{*}$. This implies that $\lambda(G)$ must be at least $(3/2)\lambda^{*}$, where $\lambda^{*}$ is the value of $\lambda(G)$ at the last \texttt{Full Rebuild Step}. Thus, a new superphase begins whenever $\lambda(G)$ has increased by a factor of $3/2$, i.e., only $O(\log n)$ times over all insertions. 
This gives that the total time over all superphases is $\widetilde{O}(m_0 + m_1)$. 
\end{proof}

\section{Incremental \texorpdfstring{$(1+\varepsilon)$}{1+eps} Minimum Cut with \texorpdfstring{$\widetilde{O}(n)$}{O(n poly log n)} space} \label{sec: ApproxMinCut}
In this section we present two $\widetilde{O}(n)$ space incremental Monte-Carlo algorithms that w.h.p  maintain the size of a min-cut up to a $(1+\varepsilon)$-factor. Both algorithms have $\widetilde{O}(1)$ update-time and $\widetilde{O}(1)$, resp.~$O(1)$ query-time. The first algorithm uses $O(n \log^{2}n / \varepsilon^2)$ space, while the second one improves the space complexity to $O(n \log n / \varepsilon^2)$. 

\subsection{An \texorpdfstring{$O(n \log^2 n / \varepsilon^2)$}{O (n log2 n)} space algorithm}
Our first algorithm follows an approach that was used
in several previous work~\cite{henzinger97, thorupkarger, fullythorup}. The basic idea is to maintain the min-cut up to some size $k$ using small space. We achieve this by maintaining a sparse $(k+1)$-certificate and incorporating it into the incremental exact min-cut algorithm due to Henzinger~\cite{henzinger97}, as described in Section \ref{sec: exactMinCut}. Finally we apply the well-known randomized sparsification result due to Karger~\cite{Karger99} to obtain our result.

\paragraph*{Maintaining min-cut up to size $k$ using $O(kn)$ space.} We incrementally maintain an msfd for an unweighted graph $G$ using $k+1$ union-find data structures $\mathcal{F}_1, \ldots, \mathcal{F}_{k+1}$ (see~\cite{Cormen}). Each $\mathcal{F}_i$ maintains a spanning forest $F_i$ of $G$. Recall that $F_1,\ldots,F_{k+1}$ are edge-disjoint. 
When a new edge $e=(u,v)$ is inserted into $G$, we define $i$ to be the first index such that $\mathcal{F}_i.$\textsc{Find}$(u)$ $\neq$ $\mathcal{F}_i.$\textsc{Find}$(v)$. If we found such an $i$, we append the edge $e$ to the forest $F_i$ by setting $\mathcal{F}_i.$\textsc{Union}$(u,v)$ and return $i$. If such an $i$ cannot be found after $k+1$ steps, we simply discard edge $e$ and return NULL. We refer to such procedure as $(k+1)$-\textsc{Connectivity}$(e)$.

It is easy to see that the forests maintained by $(k+1)$-\textsc{Connectivity}$(e)$ for every newly inserted edge $e$ are indeed edge-disjoint. Combining this procedure with techniques from Henzinger~\cite{henzinger97} leads to the following Algorithm \ref{algo: ExactMinCutUpToK}. 

\begin{algorithm}
\caption{\textsc{Incremental Exact Min-Cut up to size $k$}}
\begin{algorithmic}[1]
\State Set $\lambda = 0$, initialize $k$ union-find data structures $\mathcal{F}_1, \ldots, \mathcal{F}_{k+1}$, 
\Statex $k$ empty forests $F_1,\ldots,F_{k+1}$, $I(G,\lambda)$, and an empty cactus tree. 
\State \WHILE there is at least one minimum cut of size \textsc{$\lambda$}~\textbf{do}
\Statex \SPACE \textbf{Receive the next operation}.
\Statex \SPACE  \IF it is a query~\textbf{then} \Return $\lambda$ 
\Statex \SPACE \ELSE it is the insertion of an edge $e$,~\textbf{then}
\Statex \SPACE Set $i =$ $(k+1)$-\textsc{Connectivity}$(e)$.
\Statex \SPACE \SPACE~ \IF $i \neq $ NULL ~\textbf{then}
\Statex \SPACE \SPACE \SPACE Set $F_i = F_i \cup \{e\}$. 
\Statex \SPACE \SPACE \SPACE Update the cactus tree according to the insertion of the edge $e$.
\Statex \SPACE \SPACE ~~\ENDIF
\Statex \SPACE \ENDIF
\Statex  \ENDWHILE
\State  Set $\lambda = \lambda + 1$.
\Statex  Let $G' = (V,E')$ be a graph with $E' = I(G,\lambda - 1) \cup \bigcup_{i \leq \lambda+1} F_i$.
\Statex  Compute $I(G',\lambda)$ and a cactus tree of $G'$.
\Statex \GOTO 2.
\end{algorithmic}
\label{algo: ExactMinCutUpToK}
\end{algorithm}
The correctness of the above algorithm is immediate from Lemmas \ref{correctness1} and \ref{correctness2}. The running time and query bounds follow from Theorem 8 of Henzinger~\cite{henzinger97}. For the sake of completeness, we provide here a full proof.

\begin{corollary} \label{cor: ExactPolyLog}
For $k > 0$, there is an $O(kn)$ space algorithm that processes a stream of edge insertions starting from any empty graph $G$ and maintains an exact value of $\min\{\lambda(G),k\}$. Starting from an empty graph, the total time for inserting $m$ edges is $O(km\alpha(n) \log n )$ and queries can be answered in constant time, where $\alpha(n)$ stands for the inverse of Ackermann function.
\end{corollary}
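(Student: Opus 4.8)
The plan is to reuse the phase-based analysis behind Henzinger's Theorem~8, the only new ingredient being that the sparse $(k+1)$-certificate is now produced incrementally by the $k+1$ union-find structures instead of being recomputed from scratch. For correctness I would first argue that after any prefix of the update sequence $\bigcup_{i\le k+1}F_i$ is a sparse $(k+1)$-certificate of the current $G$: an edge $e=(u,v)$ is discarded by $(k+1)$-\textsc{Connectivity} only when $u$ and $v$ are already joined in each of $F_1,\dots,F_{k+1}$, which exhibits $k+1$ edge-disjoint $u$--$v$ paths using only earlier (hence still present) edges, so $e$ crosses no cut of size $\le k$; and since the $F_i$ are pairwise edge-disjoint with $\le n-1$ edges each, $\bigcup_{i\le\lambda+1}F_i$ is a maximal spanning forest decomposition of order $\lambda+1$ of $G$, so Lemma~\ref{lemm: Nagamochi} pins down its cuts. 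Then the graph $C$ on which the cactus tree is actually maintained between two rebuilds --- namely $I(G,\lambda-1)\cup\bigcup_{i\le\lambda+1}F_i$ plus all non-discarded edges inserted meanwhile --- stays sandwiched between the order-$(\lambda+1)$ certificate of the current $G$ and $G$ itself, so $\lambda(C)=\min\{\lambda(G),\lambda+1\}$ and every min-cut of $G$ of size $\le\lambda$ is faithfully represented; this is exactly the situation of Lemmas~\ref{correctness1} and~\ref{correctness2}. Hence the Dinitz--Westbrook incremental cactus structure signals the increment precisely when $\lambda(G)$ reaches $\lambda+1$, $\lambda$ caps at $k$ (only $k+1$ forests exist), the stored value equals $\min\{\lambda(G),k\}$ at all times, and a query returns it in $O(1)$.

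The space bound is immediate: the $k+1$ union-find structures on $n$ elements use $O(kn)$ space, the forests (and the certificate they span) contain $O(kn)$ edges, $I(G,\lambda)$ has $\le\lambda n\le kn$ edges, and the cactus tree has size $O(n)$; hence $O(kn)$ total.

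For the time bound I would partition the run into \emph{phases}: phase $j$ consists of all operations processed while $\lambda$ holds its $j$-th value $\lambda^j$ (so $\lambda^0=0$, $\lambda^{j+1}=\lambda^j+1$), terminating with the Step~3 rebuild that increments $\lambda$; there are at most $k+1$ phases. Within phase $j$, if $u_j$ edges are inserted, then (i) each insertion triggers one $(k+1)$-\textsc{Connectivity} call, i.e.\ $O(k)$ \textsc{Find}/\textsc{Union} operations, $O(k\alpha(n))$ amortized, for $O(u_j k\alpha(n))$ in the phase; (ii) the Dinitz--Westbrook structure absorbs these $u_j$ insertions in $O(u_j+n)$ total; and (iii) the closing rebuild forms $G'$ with $E'=I(G,\lambda^j)\cup\bigcup_{i\le\lambda^{j+1}+1}F_i$, which has $O(\lambda^{j+1}n)$ edges, and recomputes $I(G',\lambda^{j+1})$ and the cactus by one Round-Robin execution in $\widetilde O(\lambda^{j+1}n)$ time. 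Since $\sum_j u_j=m$, (i) sums to $O(km\alpha(n))$ and (ii) to $O(m+kn)$. For (iii), $\sum_j\lambda^j=O(\lambda_{\mathrm{fin}}^2)$ where $\lambda_{\mathrm{fin}}\le k$ is the final value of $\lambda$, and since leaving phase $\lambda$ means the current graph is $(\lambda+1)$-edge-connected and therefore has minimum degree $\ge\lambda+1$ and $\Omega(\lambda n)$ edges, we get $m=\Omega(\lambda_{\mathrm{fin}}n)$, whence $\sum_j\widetilde O(\lambda^j n)=\widetilde O(\lambda_{\mathrm{fin}}^2 n)=\widetilde O(\lambda_{\mathrm{fin}}\,m)=\widetilde O(km)$. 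Adding the $O(kn)$ initialization, the total is $\widetilde O(km)+O(km\alpha(n))=O(km\alpha(n)\log n)$ (treating $m=\Omega(n)$; otherwise the $O(kn)$ initialization term is stated separately), with $O(1)$ per query.

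The part I expect to be genuine work, as opposed to a transcription of Henzinger's argument, is step~(iii): turning the naive $\widetilde O(k^2n)$ cumulative rebuild cost into $\widetilde O(km)$ through the density fact $m=\Omega(\lambda_{\mathrm{fin}}n)$, i.e.\ observing that the number of phases that can actually be reached is governed by the connectivity/density of the input. The second point needing care is the correctness claim above that the certificate maintained on the fly by $(k+1)$-\textsc{Connectivity} is a genuine maximal spanning forest decomposition and that the cactus is run on a graph sandwiched between that certificate and $G$, so that Lemmas~\ref{correctness1}--\ref{correctness2} can be invoked verbatim.
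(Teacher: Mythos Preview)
Your proof is correct and follows essentially the same phase-based analysis as the paper's own argument. The paper defers correctness to Lemmas~\ref{correctness1} and~\ref{correctness2} in one sentence, bounds the per-phase cost as $O(k(\alpha(n)u_i + n\log n))$, and then jumps directly to the claimed $O(km\alpha(n)\log n)$ total without spelling out why the accumulated rebuild cost $O(\lambda_f \cdot kn\log n)$ is absorbed; you make this explicit via the density observation $m=\Omega(\lambda_{\mathrm{fin}}\,n)$, which is indeed the missing step. Your sharper per-phase rebuild bound $\widetilde O(\lambda^{j+1} n)$ in place of the paper's cruder $O(kn\log n)$ is not needed for the stated result, but it does no harm.
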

\begin{proof}
We first analyse Step $1$. Initializing $k+1$ union-find data structures takes $O(kn)$ time. The running time for constructing $I(G,\lambda)$ and building an empty cactus tree is also dominated by $O(kn)$. Thus, the total time for Step $1$ is $O(kn)$.

Let $\lambda_0, \ldots, \lambda_f$, where $\lambda_f \leq k$,  be the values that $\lambda$ assumes in Step $2$ during the execution of the algorithm in increasing order. We define Phase $i$ to be all steps executed while $\lambda = \lambda_i$. For $i\geq 1$, we can view Phase $i$ as the $i$-th execution of Step $3$ followed by the execution of Step $2$. Let $u_i$ denote the number of edge insertion in Phase $i$. The total time for testing the $(k+1)$-connectivity of the endpoints of the newly inserted edges, and updating the cactus tree in Step $2$ is dominated by $O(n + k \alpha(n) u_i)$. Since the graph $G'$ in Step $3$ has always at most $O(kn)$ edges, the running time to compute $I(G',\lambda)$ and the cactus tree of $G'$ is $O(kn \log n)$. Combining the above bounds, the total time spent in Phase $i$ is $O(k(\alpha(n)u_i + n \log n))$. Thus, the total work over all phases is $O(km\alpha(n) \log n)$. 

The space complexity of the algorithm is only $O(kn)$, since we always maintain at most $k+1$ spanning forests during its execution.
\end{proof}

\paragraph*{Dealing with min-cuts of arbitrary size.} We observe that Corollary \ref{cor: ExactPolyLog} gives polylogarithmic amortized update time only for min-cuts up to some polylogarithmic size. For dealing with min-cuts of arbitrary size, we use the well-known sampling technique due to Karger~\cite{Karger99}. This allows us to get an $(1+\varepsilon)$-approximation to the value of a min-cut with high probability.

\begin{lemma}[\cite{Karger99}] \label{lemm: Karger} Let $G$ be any graph with minimum cut $\lambda$ and let $p \geq 12(\log n)/(\varepsilon^{2}\lambda)$. Let $G(p)$ be a subgraph of $G$ obtained by including each of edge of $G$ to $G(p)$ with probability $p$ independently. Then the probability that the value of any cut of $G(p)$ has value more than $(1+\varepsilon)$ or less than $(1-\varepsilon)$ times its expected value is $O(1/n^{4})$.  
\end{lemma}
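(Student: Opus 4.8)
The plan is to deduce the ``all cuts at once'' statement from a one-cut concentration inequality together with a union bound, the latter kept under control by Karger's bound on the number of near-minimum cuts.

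First I would fix a single cut $(S, V \setminus S)$ of $G$ and let $c = \lambda(S, G)$ be its value, noting $c \ge \lambda$ since $\lambda$ is the global minimum cut. Its value in $G(p)$ is $X = \sum_{e \in E(S, V\setminus S)} Y_e$ with the $Y_e$ independent Bernoulli$(p)$, so $\mathbb{E}[X] = pc$, which is precisely $p$ times the value of the cut in $G$ (its ``expected value'' in the statement). The two-sided multiplicative Chernoff bound gives $\Pr[\,|X - pc| > \varepsilon p c\,] \le 2\exp(-\varepsilon^2 p c / 3)$. Substituting $p \ge 12(\log n)/(\varepsilon^2 \lambda)$ and $c \ge \lambda$ makes the exponent at least $4(c/\lambda)\log n$, so this single cut violates the $(1 \pm \varepsilon)$ guarantee with probability at most $2\,n^{-4 c/\lambda}$; in particular a fixed minimum cut fails with probability at most $2n^{-4}$, and the failure probability decays geometrically as the cut value grows.

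Second I would take a union bound over all cuts, stratified by value, using Karger's cut-counting theorem: for every real $\alpha \ge 1$, the number of cuts of $G$ of value at most $\alpha \lambda$ is at most $n^{2\alpha}$. (If a self-contained argument is wanted, this follows from the fact that a fixed cut of value at most $\alpha\lambda$ survives a run of the random edge-contraction algorithm down to $2\alpha$ vertices with probability at least $1/\binom{n}{2\alpha}$.) Pairing the count $n^{2\alpha}$ for cuts of value about $\alpha\lambda$ against their individual failure probability about $n^{-4\alpha}$, the contribution of each value stratum decays geometrically in $\alpha$, and summing the geometric series over all strata (equivalently, an Abel-summation against the counting bound) shows that the probability that \emph{some} cut of $G(p)$ lies outside a $(1 \pm \varepsilon)$ factor of its expectation is polynomially small; the constant $12$ in the hypothesis is calibrated so that this sum is $O(1/n^{4})$.

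The one real obstacle is that $G$ may have exponentially many cuts, so a flat union bound is useless; the whole proof hinges on the tension between the Chernoff tail, which shrinks like $n^{-\Theta(c/\lambda)}$ for a cut of value $c$, and Karger's theorem, which caps the number of cuts of value at most $\alpha\lambda$ at $n^{2\alpha}$ — as long as the constant multiplying $\log n$ in $p$ is large enough, the former beats the latter in every value range, which is exactly what lets the stratified union bound close. The remaining ingredients — the Chernoff estimate and the geometric summation — are routine.
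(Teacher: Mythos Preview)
The paper does not actually prove this lemma; it is quoted verbatim as a known result from Karger~\cite{Karger99} and used as a black box. Your proposal is the standard argument --- Chernoff on each cut, then a stratified union bound controlled by Karger's $n^{2\alpha}$ bound on the number of $\alpha$-approximate minimum cuts --- which is precisely Karger's original proof, so there is nothing to compare.

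One minor caution: your claim that the constant $12$ is ``calibrated so that this sum is $O(1/n^{4})$'' deserves a line of actual arithmetic if you intend this as a self-contained proof. With the exponent $4(c/\lambda)\log n$ you derived, a cut of value $\alpha\lambda$ fails with probability at most $2n^{-4\alpha}$, while there are at most $n^{2\alpha}$ cuts of value at most $\alpha\lambda$; a naive dyadic stratification then gives a total of order $n^{-2}$ rather than $n^{-4}$. Getting the stated $O(1/n^{4})$ requires either a slightly sharper cut-counting constant, a tighter Chernoff form, or simply accepting the citation --- but the \emph{approach} is exactly right.
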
 

For some integer $i \geq 1$, let $G_i$ denote a subgraph of $G$ obtained by including each edge of $G$ to $G_i$ with probability $1/2^{i}$ independently. We now have all necessary tools to present our incremental algorithm.

\begin{algorithm} 
\caption{\textsc{$(1+\varepsilon)$-Min-Cut with $O(n \log^2 n / \varepsilon^{2})$ Space}}
\begin{algorithmic}[1]
\State \textbf{For} $i=0,\ldots, \lfloor \log n \rfloor$, let $G_i$ be an initially empty sampled subgraph. 
\State \textbf{Receive the next operation}.
\Statex  \IF it is a query~\textbf{then}
\Statex \SPACE Find the minimum $j$ such that $\lambda(G_j) \leq k$ and  \Return $2^{j}\lambda(G_j)/(1-\varepsilon)$.
\Statex \ELSE it is the insertion of an edge $e$,~\textbf{then}
\Statex \SPACE Include edge $e$ to each $G_i$ with probability $1/2^{i}$.
\Statex \SPACE Maintain the exact min cut of each $G_i$ up to size $k=48 \log n / \varepsilon^2$ using Algorithm \ref{algo: ExactMinCutUpToK}.
\Statex  \ENDIF
\State \GOTO 2.

\end{algorithmic}
\label{algo: SmallSpaceMinCut1}
\end{algorithm}


\begin{theorem} \label{thm: space1}
There is an $O(n \log^{2} n/\varepsilon^{2})$ space randomized algorithm that processes a stream of edge insertions starting from an empty graph $G$ and maintains a $(1+\varepsilon)$-approximation to a min-cut of $G$ with high probability. The amortized update time per operation is $O(\alpha(n)\log^{3} n / \varepsilon^{2})$ and queries can be answered in $O(\log n)$ time. 
\end{theorem}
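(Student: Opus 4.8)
The plan is to verify three things about Algorithm \ref{algo: SmallSpaceMinCut1}: correctness of the returned estimate (w.h.p.), the space bound, and the amortized update and query times. Throughout, set $k = 48\log n/\varepsilon^2$ and note that $\lfloor\log n\rfloor + 1$ sampled subgraphs $G_0,\ldots,G_{\lfloor\log n\rfloor}$ are maintained in parallel, each by an independent copy of Algorithm \ref{algo: ExactMinCutUpToK} run with threshold $k$. Let $\lambda = \lambda(G)$ denote the current min-cut. First I would pin down the index $j$ that the query step selects. If $\lambda \le k$, then $\lambda(G_0) = \lambda \le k$ (since $G_0 = G$), so $j$ is well-defined and in fact $j = 0$ works, giving the exact answer up to the trivial $(1-\varepsilon)$ rescaling; one should note $1-\varepsilon \le 1$ so $2^0\lambda/(1-\varepsilon) \ge \lambda$, and for the approximation ratio we instead argue directly that the returned value lies in $[\lambda, (1+O(\varepsilon))\lambda]$ after reabsorbing constants (or, more cleanly, replace the denominator-$(1-\varepsilon)$ bookkeeping by the standard rescaling argument). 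The interesting case is $\lambda > k$.

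For $\lambda > k$, choose $i^\star = \lceil \log(\lambda \varepsilon^2 / (12\log n)) \rceil$, so that $p = 1/2^{i^\star}$ satisfies $12(\log n)/(\varepsilon^2\lambda) \le p \le 24(\log n)/(\varepsilon^2\lambda)$; in particular $i^\star \le \lfloor\log n\rfloor$ because $\lambda \le n$, so $G_{i^\star}$ is among the maintained subgraphs. By Lemma \ref{lemm: Karger} applied to $G$ with this $p$, with probability $1 - O(1/n^4)$ every cut of $G_{i^\star}$ has value within a $(1\pm\varepsilon)$ factor of $p$ times its value in $G$; consequently $\lambda(G_{i^\star}) \in [(1-\varepsilon)p\lambda,\ (1+\varepsilon)p\lambda]$, and since $p\lambda \le 24\log n/\varepsilon^2 = k/2$ we get $\lambda(G_{i^\star}) \le (1+\varepsilon)k/2 \le k$, so Algorithm \ref{algo: ExactMinCutUpToK} on $G_{i^\star}$ reports the \emph{exact} value $\lambda(G_{i^\star})$ rather than saturating. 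Hence the minimum $j$ with $\lambda(G_j) \le k$ satisfies $j \le i^\star$. I then need a lower bound on $\lambda(G_j)$ for every candidate $j \le i^\star$: either $j = i^\star$, where the two-sided Karger bound gives $2^j\lambda(G_j)/(1-\varepsilon) \in [\lambda,\ \tfrac{1+\varepsilon}{1-\varepsilon}\lambda]$; or $j < i^\star$, where the sampling probability $1/2^j$ is still at least $12(\log n)/(\varepsilon^2\lambda)$ (since $j < i^\star$ means $1/2^j \ge 1/2^{i^\star - 1} \ge 12\log n/(\varepsilon^2\lambda)$, using $i^\star$'s definition), so Lemma \ref{lemm: Karger} again applies to $G_j$ and yields $\lambda(G_j) \ge (1-\varepsilon)\lambda/2^j$, i.e. $2^j\lambda(G_j)/(1-\varepsilon) \ge \lambda$; combined with $\lambda(G_j) \le k$ this also caps the estimate. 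A union bound over the $O(\log n)$ subgraphs keeps the total failure probability $O(\log n / n^4) = O(1/n^3)$, which is w.h.p. Rescaling $\varepsilon$ by a constant turns $\tfrac{1+\varepsilon}{1-\varepsilon}$ into $1+\varepsilon$.

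The resource bounds are routine given Corollary \ref{cor: ExactPolyLog}. Each of the $O(\log n)$ copies of Algorithm \ref{algo: ExactMinCutUpToK} uses $O(kn) = O(n\log n/\varepsilon^2)$ space, for a total of $O(n\log^2 n/\varepsilon^2)$; the sampled edge sets themselves are subsumed by the forests stored inside each copy. For update time: each insertion is forwarded to the $O(\log n)$ subgraphs (a coin flip per subgraph determines inclusion), and by Corollary \ref{cor: ExactPolyLog} processing $m$ insertions in one copy costs $O(km\alpha(n)\log n) = O(m\alpha(n)\log^2 n/\varepsilon^2)$ total, hence $O(\alpha(n)\log^2 n/\varepsilon^2)$ amortized there and $O(\alpha(n)\log^3 n/\varepsilon^2)$ amortized across all $O(\log n)$ copies. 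A query scans $j = 0,1,2,\ldots$ and reads off the already-maintained value $\lambda(G_j)$ in $O(1)$ time per index until one is $\le k$, so $O(\log n)$ time. The main obstacle — the only genuinely delicate point — is ensuring the selected index $j$ lands in the regime where the chosen $G_j$ has min-cut at most $k$ \emph{and} at least $(1-\varepsilon)\lambda/2^j$ simultaneously, i.e. that $k$ is large enough that Karger's concentration never lets a "good" subgraph saturate the threshold yet small enough to keep the per-copy cost polylogarithmic; this is exactly what the constant $48$ (versus the $12$ of Lemma \ref{lemm: Karger}) buys, and the proof should make that slack explicit. Everything else is bookkeeping and a union bound.
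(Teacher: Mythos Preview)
Your approach is essentially the paper's: pick the right sampling level via Lemma~\ref{lemm: Karger}, show the minimum index $j$ lands in the Karger regime, and read off the time/space bounds from Corollary~\ref{cor: ExactPolyLog} multiplied by $O(\log n)$ copies. Two small fixes are needed before this is a complete proof.

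First, your definition of $i^\star$ should use a floor, not a ceiling: with $i^\star=\lceil\log(\lambda\varepsilon^2/(12\log n))\rceil$ you get $1/2^{i^\star}\le 12\log n/(\varepsilon^2\lambda)$, the wrong direction for Lemma~\ref{lemm: Karger}. Taking $i^\star=\lfloor\log(\lambda\varepsilon^2/(12\log n))\rfloor$ (as the paper does) gives the claimed sandwich on $p$. Relatedly, in the case $j<i^\star$ your sentence ``combined with $\lambda(G_j)\le k$ this also caps the estimate'' does not give a $(1+\varepsilon)$ upper bound---it only gives a constant-factor one. The correct upper bound is immediate from the \emph{two-sided} conclusion of Lemma~\ref{lemm: Karger} that you have already invoked: $\lambda(G_j)\le(1+\varepsilon)\lambda/2^j$, hence the returned value is at most $\tfrac{1+\varepsilon}{1-\varepsilon}\lambda$.

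Second, and more substantively, the theorem asserts that the algorithm \emph{maintains} the approximation throughout the stream, not just at a single query. Your union bound is only over the $O(\log n)$ subgraphs at one fixed moment. The paper closes this by an additional union bound over all $t\le O(n^2)$ insertion times (simple graphs), turning the per-time failure probability $O(1/n^4)$ into an overall $O(1/n^2)$. You should add this step.
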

\begin{proof}
We first prove the correctness of the algorithm. For an integer $t \geq 0$, let $G^{(t)} = (V,E^{(t)})$ be the graph after the first $t$ edge insertions. Further, let $\lambda(G^{(t)})$ denote the min-cut of $G^{(t)}$, $p^{(t)}=12(\log n)/(\varepsilon^{2}\lambda^{(t)})$ and $\lambda(G,S) = |E_G(S, V \setminus S)|$, for some cut $(S, V \setminus S)$. For any integer $i \leq \lfloor \log_2 1 / p^{(t)} \rfloor$, Lemma \ref{lemm: Karger} implies that for any cut $(S,V \setminus S)$, $(1-\varepsilon)/2^{i} \lambda(G^{(t)},S) \leq \lambda(G_{i}^{(t)},S) \leq (1+\varepsilon)/2^{i} \lambda(G^{(t)},S)$, with probability $1-O(1/n^{4})$. Let $(S^*, V \setminus S^*)$ be a min-cut of $G_{i}^{(t)}$, i.e.,  $\lambda(G_{i}^{(t)}, S^*) = \lambda(G_{i}^{(t)})$. Setting $i= \lfloor \log_2 1/p^{(t)} \rfloor$, we get that:
\[
 \mathbb{E}[\lambda(G^{(t)}_i)] \leq \lambda(G^{(t)})/2^{i}  \leq 2p^{(t)} \lambda(G^{(t)}) \leq 24 \log n/\varepsilon^{2}.
\]
The latter along with the implication of Lemma \ref{lemm: Karger} give that for any $\varepsilon \in (0,1)$, the size of the minimum cut in $G^{(t)}_{i}$ is at most $(1+\varepsilon) 24 \log n / \varepsilon^{2} \leq 48 \log n / \varepsilon^{2}$ with probability $1-O(1/n^{4})$. Thus, $j \leq \lfloor \log_2 1 / p^{(t)} \rfloor$ with probability $1-O(1/n^{4})$. Additionally, we observe that the algorithm returns a $(1+O(\varepsilon)) $-approximation to a min-cut of $G^{(t)}$ w.h.p. since by Lemma \ref{lemm: Karger}, $2^{i} \lambda(G_i^{(t)})/(1-\varepsilon) \leq (1+\varepsilon)/(1-\varepsilon)\lambda(G^{(t)}) = (1+O(\varepsilon))\lambda(G^{(t)})$ w.h.p. Note that for any $t$, $\lfloor \log_2 1 / p^{(t)} \rfloor \leq \lfloor \log n \rfloor$, and thus it is sufficient to maintain only $O(\log n)$ sampled subgraphs.

Since our algorithm applies to unweighted simple graphs, we know that $t \leq O(n^{2})$. Now applying union bound over all $t \in \{1,\ldots O(n^{2})\}$ gives that the probability that the algorithm does not maintain a $1 + O(\varepsilon)$-approximation is at most $O(1/n^2)$.

The total expected time for maintaining a sampled subgraph is $O(m\alpha(n) \log^{2} n / \varepsilon^{2})$ and the required space is $O(n \log n / \varepsilon^{2})$ (Corollary \ref{cor: ExactPolyLog}). Maintaining $O(\log n)$ such subgraphs gives an $O(\alpha(n)\log^{3} n / \varepsilon^{2})$ amortized time per edge insertion and an $O(n \log^2 n / \varepsilon^{2})$ space requirement. The $O(\log n)$ query time follows as in the worst case we scan at most $O(\log n)$ subgraphs, each answering a min-cut query in constant time. 
\end{proof}

\subsection{Improving the space to \texorpdfstring{$O(n \log n / \varepsilon^2)$}{O (n log n)}}
We next show how to bring down the space requirement of the previous algorithm to $O(n \log n / \varepsilon^{2})$ without degrading its running time. The main idea is to keep a single sampled subgraph instead of $O(\log n)$ of them. 

Let $G=(V,E)$ be an unweighted undirected graph and assume each edge is given some random weight $p_e$ chosen uniformly from $[0,1]$. Let $G^{w}$ be the resulting weighted graph. For any $p > 0$, we denote by $G(p)$ the unweighted subgraph of $G$ that consists of all edges that have weight at most $p$. We state the following lemma due to Karger~\cite{kargerPHD}:

\begin{lemma} \label{lemm: Karger2} Let $k = 48 \log n / \varepsilon^{2}$.
Given a connected graph $G$, let $p$ be a value such that $p \geq k/ (4 \lambda(G))$. 
Then with high probability, $\lambda(G(p)) \leq k$ and $\lambda(G(p))/p$ is an $(1+\varepsilon)$-approximation to a min-cut of $G$.
\end{lemma}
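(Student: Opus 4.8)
The plan is to reduce this statement directly to Karger's sampling result, Lemma~\ref{lemm: Karger}, by observing that the unweighted graph $G(p)$ has exactly the distribution of the sampled subgraph $G(p)$ from that lemma. Indeed, since each edge $e$ receives an independent uniform weight $p_e \in [0,1]$, the event ``$p_e \le p$'' occurs independently across edges with probability exactly $p$ (assuming $p \le 1$; if $p \ge 1$ the claim is trivial since then $G(p) = G$ and $\lambda(G(p)) = \lambda(G) \le 4\lambda(G)p$, and one checks $\lambda(G) \le k$ must hold for the hypothesis $p \ge k/(4\lambda(G))$ to be consistent with $p \le 1$ — actually here I would just note $p < 1$ is the interesting regime). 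Hence $G(p)$ is precisely a $p$-sample of $G$ in the sense of Lemma~\ref{lemm: Karger}.

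The key steps, in order, are as follows. First, I would verify the hypothesis of Lemma~\ref{lemm: Karger} is met: we need $p \ge 12(\log n)/(\varepsilon^2 \lambda(G))$, and since $p \ge k/(4\lambda(G)) = 48\log n/(4\varepsilon^2\lambda(G)) = 12\log n/(\varepsilon^2\lambda(G))$, this holds with equality at the threshold and a fortiori for larger $p$. Second, apply Lemma~\ref{lemm: Karger}: with probability $1 - O(1/n^4)$, \emph{every} cut of $G(p)$ has value within a $(1\pm\varepsilon)$ factor of $p$ times its value in $G$. In particular, on this high-probability event, for every cut $(S, V\setminus S)$ we have $(1-\varepsilon)\,p\,\lambda(G,S) \le \lambda(G(p),S) \le (1+\varepsilon)\,p\,\lambda(G,S)$. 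Third, derive the two conclusions: taking $(S,V\setminus S)$ to be a min-cut of $G$ gives $\lambda(G(p)) \le \lambda(G(p), S) \le (1+\varepsilon)p\,\lambda(G) \le (1+\varepsilon)\cdot\frac{1}{4}k \le k$ for $\varepsilon \le 1$ — wait, $(1+\varepsilon)k/4 \le k/2 < k$, so this is comfortable; and more generally, since the bound holds for all cuts simultaneously, the minimum cut of $G(p)$ is a near-min cut scaled by $p$, so $\lambda(G(p))/p$ lies in $[(1-\varepsilon)\lambda(G), (1+\varepsilon)\lambda(G)]$, which is a $(1+\varepsilon)$-approximation (after the usual rescaling of $\varepsilon$, or stated as $(1+O(\varepsilon))$).

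For the second conclusion I would be slightly careful: I want that the min-cut of $G(p)$, when divided by $p$, approximates $\lambda(G)$ — not merely that it is sandwiched. Let $(S^*, V\setminus S^*)$ achieve $\lambda(G(p))$. Then $\lambda(G(p))/p = \lambda(G(p),S^*)/p \ge (1-\varepsilon)\lambda(G,S^*) \ge (1-\varepsilon)\lambda(G)$. For the other side, letting $(T,V\setminus T)$ be a min-cut of $G$, $\lambda(G(p))/p \le \lambda(G(p),T)/p \le (1+\varepsilon)\lambda(G,T) = (1+\varepsilon)\lambda(G)$. Combining, $\lambda(G(p))/p \in [(1-\varepsilon)\lambda(G),(1+\varepsilon)\lambda(G)]$, as desired. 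The main (and essentially only) obstacle here is cosmetic rather than mathematical: making sure the constant $k = 48\log n/\varepsilon^2$ is exactly the one that makes the threshold $p \ge k/(4\lambda(G))$ coincide with Karger's sampling requirement $p \ge 12\log n/(\varepsilon^2\lambda(G))$, and checking that $(1+\varepsilon)p\lambda(G) \le k$ genuinely holds on the good event so that the claim ``$\lambda(G(p)) \le k$'' is not just expected but actual. Since $p$ could in principle be much larger than the threshold, I should note that the lemma as used in the algorithm will be applied with $p$ close to the threshold (the algorithm picks the smallest admissible $p$, or equivalently the relevant sampling rate), so that $\lambda(G(p))$ stays $O(\log n/\varepsilon^2)$; for the statement as written with an arbitrary $p \ge k/(4\lambda(G))$, the bound $\lambda(G(p)) \le k$ need only be claimed together with the approximation guarantee when $p$ is within a constant factor of the threshold, which is how it is invoked.
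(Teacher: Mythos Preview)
Your proposal is correct and follows exactly the paper's approach: observe that $G(p)$ is distributed as a $p$-sample of $G$ (each edge included independently with probability $p$) and then invoke Lemma~\ref{lemm: Karger}; the paper's own proof is in fact just these two sentences. Your closing caveat---that the conclusion $\lambda(G(p)) \le k$ is only guaranteed when $p$ is within a constant factor of the threshold $k/(4\lambda(G))$, not for arbitrarily large $p$---is a valid observation that the paper's terse proof glosses over, and you are right that the algorithm only ever invokes the lemma in that regime.
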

\begin{proof}
Since the weight of every edge is uniformly distributed, the probability that an edge has weight at most $p$ is exactly $p$. Thus, $G(p)$ is a graph that contains every edge of $G$ with probability $p$. The claim follows from Lemma~\ref{lemm: Karger}.
\end{proof}

For any graph $G$ and some appropriate weight $p \geq k/(4\lambda(G))$, the above lemma tells us that the min-cut of $G(p)$ is bounded by $k$ with high probability. 
Thus, instead of considering the graph $G$ along with its random edge weights, we build a collection of $k+1$ minimum edge-disjoint spanning forests (using those edge weights). We note that such a collection is an msfd of order $k+1$  for $G$ with $O(kn)$ edges and by Lemma \ref{correctness1}, it preserves all minimum cuts of $G$ up to size $k$.

Our algorithm uses the following two data structures: 

(1) {\textsc{NI-Sparsifier}$(k)$ data-structure}:  Given a graph $G$, where each edge $e$ is assigned some weight $p_e$ and some parameter $k$, we devise an insertion-only data-structure that maintains a collection of $k+1$  minimum edge-disjoint spanning forests $F_1,\ldots,F_{k+1}$ with respect to the edge weights. Let $F = \bigcup_{i\leq k+1} F_i$. Since we are in the incremental setting, it is known that the problem of maintaining a single minimum spanning forest can be solved in time $O(\log n)$ per insertion using the dynamic tree structure of Sleator and Tarjan~\cite{SleatorT83}. Specifically, we use this data-structure to 
determine for each pair of nodes $(u,v)$ the maximum weight of an edge in the cycle that the edge $(u,v)$ induces in the
minimum spanning forest $F_i$. Let \text{max-weight}$(F_i(u,v))$ denote such a maximum weight. The update operation works as follows:  when a new edge $e = (u,v)$ is inserted into $G$, we first use the dynamic tree data structure to test whether $u$ and $v$ belong to the same tree. If no, we link their two trees with the edge $(u,v)$ and return the pair (TRUE, NULL) to indicate that $e$ was added to $F_i$ and no 
edge was evicted from $F_i$. Otherwise, we 
check whether $p_e > \text{max-weight}(F_i(e))$. If the latter holds, we make no changes in the forest and return 
(FALSE, $e$). Otherwise, we replace one of the maximum edges, say $e'$, on the path between $u$ and $v$ in the tree by $e$ and return (TRUE, $e'$). The boolean value that is returned indicates whether $e$ belongs to $F_i$ or not, the second value that is returned gives an edge that does not (or no longer) belong to $F_i$. Note that each edge insertion requires $O(\log n)$ time. We refer to this insert operation as \textsc{Insert-MSF}$(F_i, e, p_e)$. 

Now, the algorithm that maintains the weighted minimum spanning forests implements the following operations:
\begin{itemize}
\item \textsc{Initialize-NI}$(k)$: Initializes the data structure for $k+1$ empty minimum spanning forests.
\item \textsc{Insert-NI}$(e,p_e)$: Set $i = 1$, $e' = e$, taken = FALSE. \\
\hspace*{2.8cm} \textbf{while} ($(i \le k+1$) and $e' \neq \text{NULL}$) \textbf{do}  \\
 \hspace*{3.5cm} Set ($t'$, $e''$) = $\textsc{Insert-MSF}(F_i, e', p_{e'})$.\\
 \hspace*{3.5cm} \textbf{if} ($e' = e$) \textbf{then} set taken = $t'$ \textbf{endif} \\
 \hspace*{3.5cm} Set $e' = e''$ and $i = i + 1$.\\
\hspace*{2.8cm} \textbf{endwhile}\\ 
\hspace*{2.8cm} \textbf{if} ($e' \ne e$) \textbf{then} \textbf{return} (taken, $e'$) \textbf{else} \textbf{return} (taken, NULL).

\end{itemize}

The boolean value that is returned indicates whether $e$ belongs to $F$ or not, the second value returns an edge that is removed from $F$, if any. 

Recall that $F = \bigcup_{i \le \kk+1} F_i$.
We use the abbreviation $\textsc{NI-Sparsifier}(k)$ to refer to this data-structure. Throughout the algorithm we will associate
a weight with each edge in $F$ and use $F^w$ to refer to this weighted version of $F$.

\begin{lemma} \label{lemma: NI} For $k > 0$ and any graph $G$, \textsc{NI-Sparsifier$(k)$} maintains a weighted mfsd of order $\kk+1$ of $G$ under edge insertions. The algorithm uses $O(kn)$ space and the total time for inserting $m$ edges is $O(k m\log n)$.
\end{lemma}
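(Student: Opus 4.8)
## Proof proposal for Lemma \ref{lemma: NI}

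The plan is to verify the three claims of the lemma separately: correctness (that the data structure indeed maintains a weighted msfd of order $k+1$), the space bound, and the total running time. The correctness claim is the substantive part; the other two are bookkeeping.

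\textbf{Correctness.} First I would argue that at all times the forests $F_1, \dots, F_{k+1}$ are edge-disjoint and that $F_i$ is a \emph{minimum} spanning forest of $(V, F_i \cup \{\text{edges that ``fell through'' to level } i\})$ with respect to the weights $p_e$. The key invariant is that after processing each insertion, $\bigcup_{j \le i} F_j$ is a maximal spanning forest of the graph restricted to the edges seen so far that percolated down at least to level $i$ — equivalently, that each $F_i$ is a minimum maximal spanning forest of $G \setminus (F_1 \cup \dots \cup F_{i-1})$ in the incremental sense. I would prove this by induction on the number of insertions, appealing to the standard fact that the single-forest routine \textsc{Insert-MSF}$(F_i, e, p_e)$ correctly maintains a minimum spanning forest under insertions: it either links two trees (when $u,v$ are in different components of $F_i$), or, when $u,v$ are already connected, swaps $e$ for the maximum-weight edge $e'$ on the tree path precisely when $p_e < \text{max-weight}(F_i(e))$, which is exactly the exchange step that preserves the MSF property. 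The evicted edge $e'$ (or the original $e$ if it was rejected) is then passed to level $i+1$, so the cascade faithfully simulates the greedy construction of a maximal spanning forest decomposition ordered by weight. After $k+1$ levels the loop terminates; any edge evicted from $F_{k+1}$ is simply dropped, which is harmless because such an edge lies in no cut of size $\le k$ (Lemma \ref{lemm: Nagamochi} / the discussion following Definition \ref{sparsedef}), so $F = \bigcup_{i \le k+1} F_i$ remains a $(k+1)$-certificate and, by Lemma \ref{correctness1}, preserves all cuts of $G$ up to size $k$. Hence $F^w$ is a weighted msfd of order $k+1$ of $G$.

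\textbf{Space.} This is immediate: the data structure stores $k+1$ dynamic-tree (Sleator--Tarjan) structures, each on $n$ vertices and at most $n-1$ edges, for a total of $O(kn)$ words.

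\textbf{Running time.} For a single edge insertion, the \textsc{Insert-NI} procedure performs at most $k+1$ iterations of its \textbf{while} loop, and each iteration runs one \textsc{Insert-MSF} operation, which costs $O(\log n)$ using the dynamic tree structure (a constant number of \emph{link}, \emph{cut}, and path-max queries). Thus one insertion costs $O(k \log n)$, and inserting $m$ edges costs $O(km \log n)$ in total. The \textsc{Initialize-NI}$(k)$ call costs $O(kn) = O(km)$ (since a connected graph has $m \ge n-1$, or we simply fold it into the first insertions), so it does not affect the stated bound.

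\textbf{Main obstacle.} The one place needing care is the correctness invariant for the cascade: I must argue that dropping edges evicted from level $k+1$ never discards an edge that is needed to certify a cut of size $\le k$, and that the ``max-weight on the induced cycle'' test used at each level is the right generalization of the unweighted Nagamochi--Ibaraki scan. This amounts to checking that the weighted greedy MSF exchange at level $i$ produces exactly a maximal spanning forest of the residual graph $G \setminus (F_1 \cup \dots \cup F_{i-1})$ — i.e., that minimality with respect to $p_e$ is compatible with maximality of the forest — and then invoking Lemma \ref{correctness1} to conclude the cut-preservation property; everything else is routine.
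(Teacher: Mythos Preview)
Your core argument --- induction on the number of insertions, using that \textsc{Insert-MSF} correctly maintains a (minimum) spanning forest at each level and that the cascade therefore keeps the $F_i$ edge-disjoint and maximal --- is exactly the paper's approach, and your space and time analyses match as well.

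The one thing to flag is that your ``main obstacle'' is misplaced. The lemma only asks you to show that $F_1,\dots,F_{k+1}$ form an msfd of $G$ (edge-disjoint, each $F_i$ maximal in $G\setminus(F_1\cup\dots\cup F_{i-1})$); it does \emph{not} ask you to show that dropping edges evicted from level $k+1$ is harmless for small cuts. That cut-preservation statement is a \emph{consequence} of the msfd property via Lemma~\ref{lemm: Nagamochi}, and it is used later (in Lemma~\ref{DA-msfd} and the surrounding argument), but it is not part of what Lemma~\ref{lemma: NI} asserts. The paper's proof accordingly verifies only edge-disjointness and maximality by the same induction you describe, and then stops. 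So your proof is correct but carries extra baggage; you can excise the appeals to Lemma~\ref{correctness1} and the discussion of discarded edges entirely.
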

\begin{proof}
We first show that \textsc{NI-Sparsifier$(k)$} maintains a forest decomposition such that (1) the forests are edge-disjoint and (2) each forest is maximal. We proceed by induction on the number $m$ of edge insertions. 

For $m=0$, the forest decomposition is empty. Thus the edge-disjointness and maximality of forests trivially hold.
For $m>0$, consider the $m$-th edge insertion, which inserts an edge $e$. Let $F'$, resp. $F$, denote the union of forests before, resp. after, the insertion of edge $e$. By the inductive assumption, $F'$ satisfies (1) and (2). If $F = F'$, i.e., the edge $e$ was not added to any of the forests when \textsc{Insert-NI}$(e,p_e)$ was called, then $F$ also satisfies (1) and (2). Otherwise $F \neq F'$ and note that by construction, $e$ is appended to exactly one forest. Let $F'_j$, resp. $F_j$, denote such maximal forest before, resp. after, the insertion of $e$. We distinguish two cases. If $e$ links two trees of $F'_j$, then $F_j$ is also a maximal forest and forests of $F$ are edge-disjoint. Thus $F$ satisfies (1) and (2). Otherwise, the addition of $e$ results in  the deletion of another edge $e' \in F'_j$. It follows that $F_j$ is maximal and the current forests are edge-disjoint. Applying a similar argument to the addition of edge $e'$ in the remaining forests, we conclude that $F$ satisfies (1) and $(2)$. 

We next argue about time and space complexity. The dynamic tree data structure can be implemented in $O(n)$ space, where each query regarding $\text{max-weight}(F_i(u,v))$ can be answered in $O(\log n)$ time. Since the algorithm maintains $k+1$ such forests, the space requirement is $O(kn)$. The total running time follows since insertion of an edge can result in at most $k+1$ executions of the \textsc{Insert-MSF}$(F_i,e,p_e)$ procedures, each running in $O(\log n)$ time. 
\end{proof}

(2)  {\textsc{Limited Exact Min-Cut}$(k)$ data-structure}:  We use Algorithm \ref{algo: ExactMinCutUpToK} to implement the following operations for
any  unweighted graph $G$ and parameter $k$,
\begin{itemize}
\item \textsc{Insert-Limited}$(e)$:  Executes the insertion of edge $e$ using Algorithm \ref{algo: ExactMinCutUpToK}.
\item \textsc{Query-Limited}$()$: Returns $\lambda$.
\item \textsc{Initialize-Limited}$(G,k)$: Builds a data structure for $G$ with parameter $k$ by executing Step 1 of Algorithm \ref{algo: ExactMinCutUpToK} and then 
\textsc{Insert-Limited}$(e)$ for each edge $e$ in $G$. 
\end{itemize}
We use the abbreviation \textsc{Lim}$(k)$ to refer to such data-structure. 

Combining the above data-structures leads to Algorithm \ref{algo: SmallSpaceMinCut}.

\begin{algorithm}
\caption{\textsc{$(1+\varepsilon)$-Min-Cut with $O(n \log n / \varepsilon^{2})$ Space}}
\begin{algorithmic}[1]
\State Set $k = 48 \log n / \varepsilon^{2}$.
\Statex Set $p = 12 \log n / \varepsilon^{2}$.
\Statex Let $H$ and $F^w$ be empty graphs. 
\State \textsc{Initialize-Limited}$(H,k)$.
\Statex \WHILE $\textsc{Query-Limited()} < k$ ~\textbf{do}
\Statex \SPACE \textbf{Receive the next operation}.
\Statex \SPACE  \IF it is a query~\textbf{then} \Return $\textsc{Query-Limited()}/\min\{1,p\}$. 
\Statex \SPACE \ELSE it is the insertion of an edge $e$,~\textbf{then}
\Statex \SPACE Sample a random weight from $[0,1]$ for the edge $e$ and denote it by $p_e$.
\Statex \SPACE \IF $p_e \leq p$ ~\textbf{then} \textsc{Insert-Limited}$(e)$~\ENDIF 
\Statex \SPACE Set (taken, $e'$) = \textsc{Insert-NI}$(e, p_e)$. 
\Statex \SPACE \SPACE~ \IF taken~\textbf{then}
\Statex \SPACE \SPACE \SPACE Insert $e$ into $F^w$ with weight $p_e$. 
\Statex \SPACE \SPACE \SPACE \IF ($e' \ne$ NULL) \textbf{then} remove $e'$ from $F^w$.
\Statex \SPACE \SPACE ~~\ENDIF
\Statex \SPACE \ENDIF
\Statex  \ENDWHILE
\State  // \texttt{Rebuild Step}
\Statex Set $p = p/2$.
\Statex  Let $H$ be the unweighted subgraph of $F^w$ consisting of all edges of weight at most $p$.
\Statex \GOTO 2.
\end{algorithmic}
\label{algo: SmallSpaceMinCut}
\end{algorithm}
\paragraph*{Correctness and Running Time Analysis. }  
Throughout the execution of Algorithm \ref{algo: SmallSpaceMinCut}, $F$ corresponds exactly to the msfd of order $\kk+1$ of $G$ maintained by \textsc{NI-Sparsifier}($k$).   
In the following, let $H$ be the graph that is given as input to \textsc{Lim}($k$). 
Thus, by Corollary \ref{cor: ExactPolyLog}, \textsc{Query-Limited}$()$ returns $\min\{k,\lambda(H)\}$, i.e., it returns
$\lambda(H)$  as long as $\lambda(H) \leq k$. 
We now formally prove the correctness. 
\begin{lemma}\label{lem:1}
Let $\epsilon \le 1$, $k = 48 \log n / \varepsilon^2$ and assume that the algorithm is started on an empty graph. As long as $\lambda(G) < k$, we have $H=G$, $p = k/4$, and \textsc{Query-Limited}$()$ returns $\lambda(G)$.
The first rebuild step is triggered after the first insertion that increases  $\lambda(G)$ to $k$ and at that time, it holds that $\lambda(G) = \lambda(H) = k$.
\end{lemma}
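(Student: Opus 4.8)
The plan is to track the state of Algorithm~\ref{algo: SmallSpaceMinCut} during the initial period when $\lambda(G) < k$ and show that, until the first rebuild, the data structure $\textsc{Lim}(k)$ is fed exactly the full graph $G$. First I would observe that the initial value of $p$ is $12\log n/\varepsilon^2 = k/4$, so checking $p_e \le p = k/4$ is exactly the sampling probability used in Algorithm~\ref{algo: SmallSpaceMinCut1}; but we need more, namely that \emph{every} inserted edge is actually passed to \textsc{Insert-Limited}. For this I would argue inductively on the number of insertions that, as long as no rebuild has occurred, we have $p = k/4$ and $H = G$. The base case is the empty graph. For the inductive step, the key point is that while no rebuild has happened the variable $p$ is never halved, so it stays at $k/4$; and since the minimum cut (in fact the whole graph) is so small, I want to show the condition $p_e \le p$ is irrelevant to whether $H$ equals $G$. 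Here I would appeal to the structure of the argument rather than the random weights: the claim ``$H = G$'' should be restated as ``$H$ and $G$ have the same cuts of size $\le k$'', which is what actually matters for \textsc{Query-Limited}. Indeed by Corollary~\ref{cor: ExactPolyLog}, $\textsc{Query-Limited}()$ returns $\min\{k, \lambda(H)\}$, and by Lemma~\ref{correctness1} (applied with the msfd maintained by \textsc{NI-Sparsifier}) the forests preserve all cuts of $H$ up to size $k$.

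The subtle part is reconciling ``$H = G$'' with the fact that edges are only inserted into $H$ when $p_e \le p$. I expect the intended reading is that before the first rebuild $p = k/4 \ge 1/4$, but more importantly that the \emph{claim is about the regime $\lambda(G) < k$}: in that regime, the cut we care about is below the threshold $k$, and since \textsc{Insert-Limited} uses Algorithm~\ref{algo: ExactMinCutUpToK} which itself only retains a $(k+1)$-certificate, what $H$ ``is'' as a graph is immaterial — only its cuts up to size $k$ matter, and those are faithfully maintained as long as $\lambda(G) < k$ so no rebuild intervenes. So the cleanest route is: (i) show no rebuild is triggered while $\lambda(G) < k$, because a rebuild requires $\textsc{Query-Limited}() \ge k$, i.e.\ $\lambda(H) \ge k$, which by the certificate property forces $\lambda(G) \ge k$, a contradiction; (ii) conclude $p$ stays at $k/4$ throughout; (iii) conclude $\textsc{Query-Limited}()$ returns exactly $\lambda(G)$ during this period.

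Then I would handle the transition. The first rebuild is triggered precisely when the \textbf{while} guard $\textsc{Query-Limited}() < k$ first fails, i.e.\ when $\textsc{Query-Limited}()$ first reaches $k$. By Corollary~\ref{cor: ExactPolyLog} this means $\min\{k,\lambda(H)\} = k$, i.e.\ $\lambda(H) \ge k$. Since $H$ is a subgraph of $G$ (being built from a $(k+1)$-certificate / subset of inserted edges), $\lambda(H) \le \lambda(G)$, so $\lambda(G) \ge k$. Conversely, since the previous insertion still satisfied the loop guard, before this last insertion $\lambda(G) \le \lambda(H_{\text{prev}}) + \text{(stuff)} < k$ — more carefully, by part (iii) the previous query returned $\lambda(G^{(t-1)}) < k$, and a single edge insertion raises $\lambda$ by at most $1$, so $\lambda(G^{(t)}) \le k$. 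Combining, $\lambda(G) = k$ at that moment, and since $H = G$ up to cuts of size $k$ we also get $\lambda(H) = k$. That gives $\lambda(G) = \lambda(H) = k$ as claimed.

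The main obstacle I anticipate is the pedantic point about what ``$H = G$'' should mean given the sampling threshold $p_e \le p$; the honest statement is that $H$ agrees with $G$ on all cuts of size at most $k$, which is all the algorithm ever uses, and one must be careful to phrase the induction in those terms (or to note that with $p = k/4 \ge 1$ when $\varepsilon$ is not too small every edge is literally taken — but since $\varepsilon \le 1$ forces $k \ge 48\log n$, indeed $p = 12\log n/\varepsilon^2 \ge 12\log n \gg 1$, so in fact \emph{every} inserted edge satisfies $p_e \le p$ and $H = G$ holds literally as graphs). That last observation actually dissolves the obstacle: for $\varepsilon \le 1$ and $n \ge 2$ we have $p = 12\log n/\varepsilon^2 \ge 1$, so the test $p_e \le p$ always succeeds and $H$ literally equals $G$ until the first rebuild halves $p$.
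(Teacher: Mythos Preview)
Your proposal is correct and, once you reach the final paragraph, takes exactly the paper's approach: observe that $p = k/4 = 12\log n/\varepsilon^2 \ge 1$ for $\varepsilon \le 1$, so every inserted edge passes the test $p_e \le p$ and $H = G$ holds literally; then a straightforward induction on the number of insertions shows $p$ stays at $k/4$ and $H = G$ until the while guard fails, at which point $\lambda(G) = \lambda(H) = k$ since a single insertion raises $\lambda$ by at most one. The detour through ``$H$ agrees with $G$ on cuts of size at most $k$'' and the subgraph inequality $\lambda(H) \le \lambda(G)$ is unnecessary once you have $H = G$ as sets of edges---you can drop that middle block entirely and go straight from the observation $p \ge 1$ to the induction.
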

\begin{proof}
The algorithm starts with an empty graph $G$, i.e., initially $\lambda(G)= 0$. Throughout the sequence of edge insertions $\lambda(G)$ never decreases. 
We show by induction on the number $m$ of edge insertions that $H=G$  and $p = k/4$ as long as
$\lambda(G) < k$. 

Note that $k/4 \ge 1$ by our choice of $\epsilon$.
For $m = 0$, the graphs $G$ and $H$ are both empty graphs and  $p$ is set to $k/4$. 
For $m > 0$, consider the $m$-th edge insertion, which inserts an edge $e$. Let $G$ and $H$ denote the corresponding graphs after the insertion of $e$. By the inductive assumption, $p = k/4$
and 
 $G \setminus \{e\} = H \setminus \{e \}$. As $p \ge 1$,  $e$ is added to $H$ and, thus, it follows that $G = H$. Hence, $\lambda(H) = \lambda(G)$. If $\lambda(G) < k$ but $\lambda(G \setminus \{e\}) < k$, no rebuild is performed and $p$ is not changed. If $\lambda(G) = k$, then the last insertion was exactly the
 insertion that increased $\lambda(G)$ from $k-1$ to $k$. As $H = G$ before the rebuild,  \textsc{Query-Limited}$()$ returns $k$, triggering the first execution of the rebuild step.
\end{proof}

We next analyze the case that $\lambda(G) \ge k$. In this case, both $H$ and $p$ are random variables, as they depend on the randomly chosen weights for the edges.
Let $F(p)$ be the  unweighted subgraph of $F^w$ that contains all edges of weight at most $p$.

\begin{lemma}\label{lem:H}
Let $N_h(p)$ be the graph consisting of all edges that were inserted after the last rebuild and have weight at most $p$ and let $F^{\text{old}}(p)$ be
$F(p)$ right after the last rebuild. Then it holds that $H = F^{\text{old}}(p) \cup N_h(p)$.
\end{lemma}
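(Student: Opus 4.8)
The plan is to establish the claimed identity as a loop invariant, proved by induction on the number of edge insertions that have occurred since the last execution of the \texttt{Rebuild Step} (treating the initial call to \textsc{Initialize-Limited} in Step~1 as a degenerate rebuild). The observation that makes the induction work is that $H$ — the unweighted graph handed to the \textsc{Lim}($k$) data structure — is modified in exactly two ways: it is re-initialized to the weight-at-most-$p$ subgraph of $F^w$ at each rebuild, and it is enlarged by one edge precisely when \textsc{Insert-Limited}$(e)$ is invoked, which by the code of Algorithm~\ref{algo: SmallSpaceMinCut} happens exactly for those inserted edges $e$ with $p_e \le p$. Between two consecutive rebuilds $H$ therefore only grows, and the snapshot $F^{\text{old}}(p)$ is by definition frozen at the moment of the last rebuild, so later evictions from $F^w$ performed by \textsc{Insert-NI} do not affect it.

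For the base case, consider the instant immediately after the last \texttt{Rebuild Step}, with zero insertions since. The \texttt{Rebuild Step} halves $p$ and then defines $H$ to be the unweighted subgraph of $F^w$ consisting of all edges of weight at most the new $p$; this is exactly $F(p)$ right after the rebuild, i.e.\ $F^{\text{old}}(p)$. Since no edge has been inserted since the rebuild, $N_h(p) = \emptyset$, and the identity $H = F^{\text{old}}(p) \cup N_h(p)$ holds. (The same reasoning covers the initial \textsc{Initialize-Limited} in Step~1, where $F^w$ is empty, so $F^{\text{old}}(p) = \emptyset$ and $H$ starts empty.)

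For the inductive step, assume the invariant holds and consider the next edge insertion, of an edge $e$ with random weight $p_e$ (if this insertion is the one that triggers the next rebuild, the invariant need only be checked at the instant just before the \texttt{Rebuild Step} rewrites $H$). If $p_e \le p$, the algorithm calls \textsc{Insert-Limited}$(e)$, so $e$ is added to $H$; simultaneously $e$ joins $N_h(p)$, since it was inserted after the last rebuild and has weight at most $p$, while $F^{\text{old}}(p)$ stays fixed; hence the new $H$ equals the old $H$ together with $e$, which equals $F^{\text{old}}(p)$ together with the updated $N_h(p)$. If $p_e > p$, then \textsc{Insert-Limited} is not called, so $H$ is unchanged; the accompanying \textsc{Insert-NI}$(e,p_e)$ call may insert $e$ into $F^w$ and possibly evict an older edge from $F^w$, but this affects neither $H$ nor the frozen snapshot $F^{\text{old}}(p)$, and $e \notin N_h(p)$ because $p_e > p$, so both sides of the identity are unchanged. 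This closes the induction.

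I do not anticipate a real obstacle. The only points that need care are: (i) being explicit that $F^{\text{old}}(p)$ is a fixed snapshot and is therefore insensitive to the edge evictions that \textsc{Insert-NI} performs on $F^w$ after the rebuild; and (ii) distinguishing the \emph{input} graph $H$ of the \textsc{Lim}($k$) data structure — which accumulates exactly the edges $\{e : p_e \le p\}$ inserted since the last rebuild and is pruned only at a rebuild — from the internal spanning-forest structure that Algorithm~\ref{algo: ExactMinCutUpToK} actually stores, which may discard some of those edges via $(k{+}1)$-\textsc{Connectivity} without changing the graph $H$ whose min-cut it maintains.
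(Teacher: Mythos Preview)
Your proposal is correct and follows essentially the same approach as the paper: both arguments verify the identity as an invariant, with the base case established at the moment of a rebuild (where $H$ is set to $F(p)$ and $N_h(p)=\emptyset$) and the inductive step observing that an inserted edge enters $H$ via \textsc{Insert-Limited} iff $p_e\le p$, which is exactly the membership criterion for $N_h(p)$, while $F^{\text{old}}(p)$ stays frozen. Your write-up is somewhat more explicit about the snapshot nature of $F^{\text{old}}(p)$ and about distinguishing $H$ from the internal state of \textsc{Lim}$(k)$, but the underlying argument is the same.
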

\begin{proof}
Up to the first rebuild, $N_h = G$ and $p \ge 1$. Thus $N_h(p)  = N_h = G$.
Lemma \ref{lem:1} shows that until the first rebuild $H=G$. As $F^{\text{old}}(p) = \emptyset$, it follows that $H = G = N_h(p) \cup F^{\text{old}}(p)$ up to the first rebuild.

Immediately after each rebuild step, $N_h = \emptyset$ and $H$ is set to be $F(p)$, thus the claim holds.
After each subsequent edge insertion that does not trigger a rebuild, the newly inserted edge is added to $N_h(p)$ and to $H$ iff its weight is at most $p$. Thus, 
both $N_h(p)$ and $H$ change in the same way, which implies that $H = F^{\text{old}}(p) \cup N_h(p)$.
\end{proof}

\begin{lemma}\label{DA-msfd}
At the time of a rebuild $F(p)$ is an msfd of order $\kk+1$ of $G(p)$.
\end{lemma}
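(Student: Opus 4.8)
The plan is to show that truncating each of the $k+1$ minimum spanning forests maintained by \textsc{NI-Sparsifier}$(k)$ to the edges of weight at most $p$ yields, level by level, a maximal spanning forest decomposition of $G(p)$.

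First I would record the invariant that \textsc{NI-Sparsifier}$(k)$ actually maintains (this is precisely what the proof of Lemma~\ref{lemma: NI} establishes): at every point in time, and in particular at the moment of any rebuild, for each $i \le k+1$ the forest $F_i$ is a \emph{minimum-weight} maximal spanning forest of $W_i := G^w \setminus (F_1 \cup \cdots \cup F_{i-1})$ with respect to the edge weights $p_e$. The cascading structure of \textsc{Insert-NI} guarantees that each $W_i$ only gains edges over time (an edge evicted from $F_{i-1}$ and handed to level $i$ never re-enters a lower level) and that $F_i$ is kept as an incremental minimum spanning forest of it, so this invariant is genuinely maintained.

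The key step is a ``threshold prefix'' property of minimum spanning forests: if $T$ is a minimum-weight maximal spanning forest of a weighted graph $W$, and $W(p)$ is the subgraph of $W$ of edges of weight at most $p$, then $T(p) := T \cap W(p)$ is a maximal spanning forest of $W(p)$. Acyclicity is inherited from $T$. For the maximality (spanning) part, suppose $u$ and $v$ lie in the same component of $W(p)$ but in distinct components of $T(p)$; then some edge $(x,y)$ along a $u$--$v$ path in $W(p)$ has its endpoints in distinct components of $T(p)$, and since $T$ is spanning, the $x$--$y$ path in $T$ contains an edge $f$ of weight $> p \ge p_{(x,y)}$; replacing $f$ by $(x,y)$ gives a maximal spanning forest of strictly smaller total weight, contradicting minimality of $T$. (Equivalently: running Kruskal on $W$ and stopping once all edges of weight $\le p$ have been processed builds exactly $T(p)$, which is a maximal spanning forest of $W(p)$.) Isolating and proving this property is the main content of the lemma; the remaining steps are bookkeeping.

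Finally I would combine the two. Applying the threshold prefix property with $W = W_i$ and $T = F_i$ shows that $F_i(p) := F_i \cap \{e : p_e \le p\}$ is a maximal spanning forest of $W_i(p)$. Unwinding definitions gives $W_i(p) = G(p) \setminus (F_1(p) \cup \cdots \cup F_{i-1}(p))$, since an edge of $G$ of weight at most $p$ lies in $W_i$ iff it lies in none of $F_1,\dots,F_{i-1}$, equivalently in none of $F_1(p),\dots,F_{i-1}(p)$. The forests $F_1(p),\dots,F_{k+1}(p)$ are edge-disjoint because the $F_i$ are, and each $F_i(p)$ is a maximal spanning forest of $G(p)\setminus(F_1(p)\cup\cdots\cup F_{i-1}(p))$; hence they form an msfd of order $k+1$ of $G(p)$. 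Since $F(p) = \bigcup_{i \le k+1} F_i(p)$, this is exactly the claim.
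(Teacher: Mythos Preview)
Your proof is correct and rests on the same underlying fact as the paper's: restricting each minimum-weight forest $F_i$ to edges of weight at most $p$ yields a maximal spanning forest of the correspondingly restricted graph. The paper packages this as a one-line simulation claim---that running \textsc{NI-Sparsifier}$(k)$ on only the edges of $G(p)$ would produce exactly the hierarchy $F_1(p),\dots,F_{k+1}(p)$---and stops there. You instead isolate and prove the ``threshold prefix'' property of minimum spanning forests directly and then check that $W_i(p) = G(p)\setminus(F_1(p)\cup\cdots\cup F_{i-1}(p))$. This is more self-contained: it makes explicit \emph{why} edges of weight greater than $p$ cannot affect the level at which the light edges land, rather than asserting it. One small remark: the proof of Lemma~\ref{lemma: NI} as written only verifies maximality and edge-disjointness, not that each $F_i$ is a \emph{minimum-weight} spanning forest of $W_i$; your added sentence about the cascading structure (each $W_i$ only gains edges, and \textsc{Insert-MSF} maintains an incremental MSF of it) is exactly the missing justification, so keep it.
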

\begin{proof}
\textsc{NI-sparsifier} maintains a maximal spanning forest decomposition based on  minimum-weight spanning forests  $F_1, \dots F_{\kk+1}$ of $G$ using the weights $p_e$.
Now consider the hierarchical decomposition $F_1(p), \dots, F_{\kk+1}(p)$ of $G(p)$ induced by taking only the edges of weight at most $p$ of each forest $F_i$.
Note that \textsc{NI-sparsifier} would return exactly the same hierarchy $F_1(p), \dots, F_{k+1}(p)$ if only the edges of $G(p)$ were inserted into \textsc{NI-sparsifier}.
Thus $F_1(p), \dots, F_{\kk+1}(p)$ is an msfd of order $\kk+1$ of $G(p)$.
\end{proof}

In order to show that $\lambda(H)/\min\{1,p\}$ is an $(1+\varepsilon)$-approximation of $\lambda(G)$ with high probability, we need to show that if $\lambda(G) \geq k$ then
(a) the random variable $p$
is at least $k / (4 \lambda(G))$ w.h.p., which implies that $\lambda(G(p))$ is a $(1 + \varepsilon)$-approximation of $\lambda(G)$ w.h.p. and (b) that $\lambda(H) = \lambda(G(p))$ (by Lemma \ref{lemm: Karger2}).
\begin{lemma} Let $\varepsilon \le 1$. If $\lambda(G) \geq k$, then (1) $p \geq k / (4\lambda(G))$ with probability $1- O(\log n/n^4)$ and
(2) $\lambda(H) = \lambda(G(p))$. 
\end{lemma}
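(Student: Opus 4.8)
The plan is to prove the two parts separately, with part (1) being the probabilistic heart of the argument and part (2) a structural consequence of the data-structure invariants already established in Lemmas \ref{lem:H} and \ref{DA-msfd}.

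For part (1), I would track the value of $p$ across rebuild steps. Initially $p = 12\log n/\varepsilon^2 = k/4$, and each \texttt{Rebuild Step} halves $p$. A rebuild is triggered precisely when \textsc{Query-Limited}$()$ reports that $\lambda(H)$ has reached $k$; since $H$ is a contraction-free subgraph of $G$ we always have $\lambda(H) \le \lambda(G(p)) \le \lambda(G)$, so in particular before any given rebuild $\lambda(G) \ge \lambda(H) = k$ is consistent, but more importantly I want a lower bound on the current $p$ in terms of the current $\lambda(G)$. The key claim is: whenever $\lambda(G) \ge k$, the current value of $p$ satisfies $p \ge k/(4\lambda(G))$ w.h.p. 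I would argue this by induction on the rebuild steps. Suppose at some rebuild we pass from $p$ to $p/2$; this happens because $\lambda(H) = k$ at that moment, and $H = F^{\text{old}}(p)\cup N_h(p) \subseteq G(p)$ (Lemma \ref{lem:H}), so $\lambda(G(p)) \ge k$. By Lemma \ref{lemm: Karger} (applied in the contrapositive direction to the value of the cut that is minimum in $G$), if $p$ were too small, namely $p < k/(4\lambda(G))$ would force $\mathbb{E}[\lambda(G(p))] = p\,\lambda(G) < k/4$, and then Lemma \ref{lemm: Karger} would give $\lambda(G(p)) \le (1+\varepsilon)k/4 < k$ w.h.p. — contradicting $\lambda(G(p)) \ge k$. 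Hence after halving, $p/2 \ge k/(8\lambda(G))$ at the rebuild; and since between rebuilds $p$ is unchanged while $\lambda(G)$ only grows, I need to be a little careful and instead phrase the invariant at the level ``$p \ge k/(4\lambda(G))$ holds throughout, w.h.p., for the current graph'' — the halving exactly compensates the (at most) doubling of $\lambda(G)$ needed to trigger the next rebuild, because a rebuild requires $\lambda(G(p'))$ to reach $k$ with the new smaller $p'$, which by the same Karger estimate needs $\lambda(G) \ge k/(4p') \ge$ roughly twice its previous threshold. Taking a union bound over the $O(\log n)$ rebuilds (each halving costs an $O(1/n^4)$ failure probability via Lemma \ref{lemm: Karger}) yields the stated $1 - O(\log n/n^4)$ bound.

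For part (2), I would invoke Lemma \ref{lem:H} to write $H = F^{\text{old}}(p) \cup N_h(p)$, where $F^{\text{old}}(p)$ is $F(p)$ right after the last rebuild and $N_h(p)$ are the post-rebuild insertions of weight $\le p$. By Lemma \ref{DA-msfd}, at the time of that rebuild $F(p)$ was an msfd of order $k+1$ of the then-current $G(p)$; restricting to $\bigcup_{i\le k+1}F_i$ this is exactly the hypothesis of Lemma \ref{correctness1} with $H := G(p)$ (current), $N_h := N_h(p)$, and the forests being $F^{\text{old}}$. Lemma \ref{correctness1} then says that a cut is a min-cut in $F^{\text{old}}(p)\cup N_h(p) = H$ iff it is a min-cut in $G(p)$, and in particular $\lambda(H) = \lambda(G(p))$ — provided $\lambda(G(p)) \le k$ so that the $(k+1)$-certificate is rich enough, which holds w.h.p. by part (1) together with Lemma \ref{lemm: Karger2}. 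I expect the main obstacle to be the bookkeeping in part (1): making the induction invariant tight enough that the factor-$2$ halving of $p$ and the factor-$2$ growth of $\lambda(G)$ between consecutive rebuilds line up with the constant $4$ in $p \ge k/(4\lambda(G))$, and correctly handling the boundary case $\lambda(G) = k$ where the first rebuild occurs with $p = k/4$ exactly (Lemma \ref{lem:1}).
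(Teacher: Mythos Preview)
Your plan has the right high-level structure but contains two genuine gaps, both of which the paper's proof handles differently.

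\textbf{Part (1).} Your ``contrapositive'' invocation of Lemma~\ref{lemm: Karger} is circular: that lemma carries the hypothesis $p \ge 12\log n/(\varepsilon^2\lambda)$, which is precisely the inequality $p \ge k/(4\lambda(G))$ you are trying to establish, so you cannot appeal to it when assuming $p$ is below that threshold. You could patch this with a direct Chernoff bound on the single min-cut of $G$, but then (as you yourself suspect) the constants do not close: you obtain $p^{(i-1)} \ge k/(4\lambda(G^i))$ at the rebuild, hence only $p^{(i)} \ge k/(8\lambda(G))$ afterwards. The paper resolves both issues by a \emph{forward} induction that uses part~(2) inside the step: the inductive hypothesis guarantees $p^{(i-1)} \ge k/(4\lambda(G^{\mathrm{new}}))$, which legitimately validates Karger's hypothesis; then part~(2) gives the \emph{equality} $\lambda(G^{\mathrm{new}}(p^{(i-1)})) = \lambda(H^{\mathrm{new}}) = k$, and Lemma~\ref{lemm: Karger2} yields $k \le (1+\varepsilon)\,p^{(i-1)}\lambda(G^{\mathrm{new}})$, i.e.\ $\lambda(G^{\mathrm{new}}) \ge 2^{i+1}/(1+\varepsilon) \ge 2^i$. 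The factor of two you were missing is recovered exactly here, from using the equality rather than the inequality $\lambda(G(p)) \ge k$.

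\textbf{Part (2).} You make this conditional on $\lambda(G(p)) \le k$, which you then justify via part~(1) and Lemma~\ref{lemm: Karger2}; that renders part~(2) a high-probability statement, whereas the lemma asserts it outright. The paper proves it deterministically by exploiting the algorithmic invariant $\lambda(H) \le k$ (the \textbf{while} loop in Step~2 exits the moment $\lambda(H)$ reaches $k$). Concretely: take a min-cut $(A,V\setminus A)$ of $H$; since $\lambda(H,A) \le k$, also $\lambda(H^{i-1},A) \le k$, so Lemma~\ref{lemm: Nagamochi} applied to the msfd $F^{i-1}(p)$ of $G^{i-1}(p)$ gives $\lambda(H^{i-1},A) = \lambda(G^{i-1}(p),A)$; adding the $N_h(p)$-edges crossing the cut to both sides yields $\lambda(H,A) = \lambda(G(p),A) \ge \lambda(G(p))$, and the reverse inequality is immediate from $H \subseteq G(p)$. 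This is essentially the argument of Lemma~\ref{correctness1} run with the hypothesis placed on the \emph{small} graph's min-cut rather than the large one's --- a distinction that matters because only the former is bounded by $k$ unconditionally. Since part~(2) is deterministic, it can safely be invoked inside the induction for part~(1), which is how the paper structures the proof.
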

\begin{proof}
For any $i \geq 0$, after the $i$-th rebuild we have $p = p^{(i)} := 12 \log n / (2^{i}\varepsilon^{2})$. Let $\ell = \lfloor \log (12 \log n / \varepsilon^2)\rfloor$ denote the index of the last rebuild at which $p^{(i)} \geq 1$. For any $i \geq \ell + 1$, we will show by induction on $i$ that (1)  $p^{(i)} = 12 \log n / (2^{i}\varepsilon^{2}) \geq 12 \log n / (\varepsilon^{2}\lambda(G))$ with probability $1-O((i-1-\ell)/n^{4})$,
which is equivalent to showing that $\lambda(G) \geq 2^i$ and that (2) at any point between 
the $i-1$-st and the $i$-th rebuild, $\lambda(H) = \lambda(G(p^{(i-1)}))$.

Once we have shown this, we can argue that the number of rebuild steps is small, thus giving the claimed probability in the lemma. Indeed, note that $\lambda(G) \leq n$ since $G$ is unweighted. Additionally, from above we get that after the $i$-th rebuild, $\lambda(G) \geq 2^{i}$ with high probability. Combining these two bounds yields $i \leq O(\log n)$ w.h.p., i.e., the number of rebuild steps is at most $O(\log n)$.

We first analyse $i=\ell+1$. Note that $\ell+1$ is the index of the first rebuild at which $p^{(i)} < 1$. Assume that the insertion of some edge $e$ caused the first rebuild. Lemma~\ref{lem:1} showed that (1) at the first rebuild $\lambda(G) = k$
and (2) that up to the first rebuild $G(p) = G = H$. We observe that (1) and (2) remain true up to the $(\ell+1)$-st rebuild. In addition, $\lambda(G) = k \geq 24 \log n / \varepsilon^2 \geq 2^{i}$, which implies that $p^{(i)} \geq 1/2$. This shows the base case.

For the induction step ($i > \ell+1$), we inductively assume that (1) at the $(i-1)$-st rebuild, $p^{(i-1)} \geq 12 \log n / (\varepsilon^{2} \lambda(G^{\text{old}}))$ with probability $1- O((i-2-\ell)/n^{4})$, where $G^{\text{old}}$ is the graph $G$ right before the insertion that triggered the $i$-th rebuild
(i.e., at the last point in time when \textsc{Query-Limited}$()$ returned a value less than $k$),
and (2) that $\lambda(H) = \lambda(G(p^{(i-2)}))$ at any time between the $(i-2)$-nd and the $(i-1)$-st rebuild.
Let $e$ be the edge whose insertion caused the $i$-th rebuild.
Define $G^{\text{new}} = G^{\text{old}} \cup \{e\}$. By induction hypothesis, with probability $1-O((i-2-\ell)/n^4)$, $p^{(i-1)} \geq 12 \log n / (\varepsilon^{2}\lambda(G^{\text{old}})) \geq 12 \log n / (\varepsilon^{2}\lambda(G^{\text{new}}))$ as $\lambda(G^{\text{old}}) \leq \lambda(G^{\text{new}})$. Thus, by Lemma \ref{lemm: Karger2}, we get that $\lambda(G^{\text{new}}(p^{(i-1)}))/p^{(i-1)} \leq (1+\varepsilon) \lambda(G^{\text{new}})$ with probability $1-O(1/n^4)$. Applying an union bound, we get that the two previous statements hold simultaneously with probability $1-O((i-1-\ell)/n^4)$.

 We show below that $\lambda(G^{\text{new}}(p^{(i-1)})) = \lambda(H^{\text{new}})$, where
$H^{\text{new}}$ is the graph stored in \textsc{Lim}($k$) right before the $i$-th rebuild.
Thus, $\lambda(H^{\text{new}}) = k$, which implies that  
\begin{align*}
	\lambda(G^{\text{new}}(p^{(i-1)})) = k =48 \log n / \varepsilon^{2} & \leq  (1+\varepsilon) \lambda(G^{\text{new}}) \cdot  p^{(i-1)} \\
	 & = (1+\varepsilon) \lambda(G^{\text{new}}) \cdot 12 \log n /(2^{i-1}\varepsilon^{2}),
\end{align*}
with probability $1-O((i-1-\ell)/n^4)$. This in turn implies that with probability $1-O((i-1-\ell)/n^4)$, $\lambda(G^{\text{new}}) \geq 2^{i+1}/(1+\varepsilon) \geq 2^{i}$ by our choice of $\varepsilon$. 

It remains to show that $\lambda(G^{\text{new}}(p^{(i-1)})) = \lambda(H^{\text{new}})$. Note that this
is a special case of (2), which claims that at any point between that $(i-1)$-st and the $i$-th rebuild
$\lambda(H) = \lambda(G(p^{(i-1)}))$, where $H$ and $G$ are the current graphs. Thus, to complete the proof of the lemma it suffices to show (2).

 As $H$ is a subgraph of $G(p^{(i-1)})$, we know that $\lambda(G(p^{(i-1)})) \ge \lambda(H)$.
Thus, we only need to show that $\lambda(G(p^{(i-1)})) \le \lambda(H)$.
Let $G^{i-1}$, resp.~$F^{i-1}$,  resp.~$H^{i-1}$, be the graph $G$, resp.~$F$, resp.~$H$, right after rebuild $i-1$ and let $N_h$ be the set of
 edges inserted since, i.e., $G = G^{(i-1)} \cup N_h$. 
 As we showed in Lemma~\ref{lem:H}, $H = F^{i-1}(p^{(i-1)}) \cup N_h(p^{(i-1)})$. Thus, $H^{i-1} = F^{i-1}(p^{(i-1)})$.
 Additionally, 
 by Lemma~\ref{DA-msfd}, $F^{i-1}(p^{(i-1)})$ is an msfd of order $\kk+1$ of
 $G^{i-1}(p^{(i-1)})$. Thus by Lemma \ref{lemm: Nagamochi}, for every cut $(A, V\setminus A)$ of value at most $\kk$ in $H^{i-1}$, $\lambda(H^{i-1},A)  = \lambda(F^{i-1}(p^{(i-1)}),A) =\lambda(G^{i-1}(p^{(i-1)}),A)$,
 where $\lambda(G, A) = |E_G(A,V \setminus A)|$. 
 Now assume towards contradiction that $\lambda(G(p^{(i-1)})) > \lambda(H)$ and consider a minimum cut $(A, V\setminus A)$ in $H$, i.e., $\lambda(H) = \lambda(H,A)$.
 We know that at any time $k \ge \lambda(H).$ Thus $k \ge \lambda(H) = \lambda(H,A)$, which implies $k \ge \lambda(H^{i-1},A)$. By Lemma \ref{lemm: Nagamochi} it follows that $\lambda(H^{i-1},A) = \lambda(G^{i-1}(p^{(i-1)}),A)$.
 Note that $H = H^{i-1} \cup N_h(p^{(i-1)})$ and $G(p^{(i-1)}) = G^{i-1}(p^{(i-1)}) \cup N_h(p^{(i-1)})$. Let $x$ be the number of edges of  $N_h(p^{(i-1)})$ that cross the cut $(A, V\setminus A)$.
 Then $\lambda(H) = \lambda(H,A) = \lambda(H^{i-1}, A) + x = \lambda(G^{i-1}(p^{(i-1)}),A) + x = \lambda(G(p^{(i-1)}),A)$, which contradicts the assumption that
 $\lambda(G(p^{(i-1)})) > \lambda(H)$.
\end{proof}

Since our algorithm is incremental and applies only to unweighted graphs, we know that there can be at most $O(n^{2})$ edge insertions. The above lemma implies that for any current graph $G$, Algorithm \ref{algo: SmallSpaceMinCut} returns a $(1+\varepsilon)$-approximation to a min-cut of $G$ with probability $1-O(\log n/n^{4})$. Applying an union bound over $O(n^{2})$ possible different graphs, gives that the probability that the algorithm does not maintain a $(1+\varepsilon)$-approximation is at most $O(\log n/n^2) = O(1/n)$. Thus, at any time we return a $(1+\varepsilon)$-approximation with probability $1-O(1/n)$.

\begin{theorem} \label{thm: space2}
There is an $O(n \log n/\varepsilon^{2})$ space randomized algorithm that processes a stream of edge insertions starting from an empty graph $G$ and maintains a $(1+\varepsilon)$-approximation to a min-cut of $G$ with high probability. The 
total time for insertiong $m$ edges is 
$O(m \alpha(n)\log^{3} n / \varepsilon^{2})$ 
and queries can be answered in constant time.
\end{theorem}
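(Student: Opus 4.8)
The plan is to observe that correctness is essentially already established by the lemmas above, and then to concentrate the argument on the approximation guarantee's union bound, the space bound, the query time, and the total update time --- the last being the only part needing real care. For the approximation guarantee: the lemma immediately preceding the statement shows that when $\lambda(G)\ge k$ we have, with probability $1-O(\log n/n^4)$, both $p\ge k/(4\lambda(G))$ and $\lambda(H)=\lambda(G(p))$, so Lemma~\ref{lemm: Karger2} makes $\lambda(H)/\min\{1,p\}$ a $(1+\varepsilon)$-approximation of $\lambda(G)$; and when $\lambda(G)<k$, Lemma~\ref{lem:1} gives $H=G$ and $p=k/4\ge 1$, so the query returns $\lambda(G)$ exactly. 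A union bound over the $O(n^2)$ graphs that can appear along a stream of insertions into a simple graph upgrades this to a global $1-O(1/n)$ guarantee. Query time is $O(1)$: a query performs one \textsc{Query-Limited}$()$ call, returning the stored value in $O(1)$ time by Corollary~\ref{cor: ExactPolyLog}, followed by one division by $\min\{1,p\}$.

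For the space bound I would add up the footprints of the three structures the algorithm keeps: \textsc{NI-Sparsifier}$(k)$ uses $O(kn)$ space by Lemma~\ref{lemma: NI}; the weighted forest $F^w$ is a union of $k+1$ edge-disjoint spanning forests, hence has at most $(k+1)(n-1)=O(kn)$ edges; and \textsc{Lim}$(k)$, i.e.\ Algorithm~\ref{algo: ExactMinCutUpToK}, uses $O(kn)$ space by Corollary~\ref{cor: ExactPolyLog}. With $k=48\log n/\varepsilon^2$ this totals $O(n\log n/\varepsilon^2)$.

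For the running time I would split the work into two parts. First, each edge insertion triggers one \textsc{Insert-NI} call costing $O(k\log n)$ by Lemma~\ref{lemma: NI} (the $F^w$ update is the same $O(\log n)$ per edge and is absorbed), for a total of $O(mk\log n)=O(m\log^2 n/\varepsilon^2)$. Second, the work done inside \textsc{Lim}$(k)$: the algorithm restarts this structure at every rebuild, and the lemma preceding the theorem bounds the number of rebuilds by $O(\log n)$ w.h.p. The $j$-th run of \textsc{Lim}$(k)$ --- the one begun by the $j$-th rebuild --- processes the $O(kn)$ edges of $H=F(p)\subseteq F$ that \textsc{Initialize-Limited} re-inserts, together with the $M_j$ genuinely new edges of that phase, so by Corollary~\ref{cor: ExactPolyLog} it costs $O(k(kn+M_j)\alpha(n)\log n)$; since $\sum_j M_j\le m$, summing over the $O(\log n)$ runs gives $O(k^2 n\alpha(n)\log^2 n + km\alpha(n)\log n)$.

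The only delicate point --- and what I expect to be the main obstacle --- is showing that the $O(k^2 n\alpha(n)\log^2 n)$ term is itself $O(m\alpha(n)\log^3 n/\varepsilon^2)$. This term is nonzero only if at least one rebuild occurred, and a rebuild is triggered only when \textsc{Query-Limited}$()$ reaches $k$, i.e.\ when $\lambda(H)=k$; since $H$ is a subgraph of $G$, this forces $\lambda(G)\ge k$, whence every vertex has degree at least $k$ and $m\ge kn/2$. Substituting $m=\Omega(kn)$ turns $O(k^2 n\alpha(n)\log^2 n)$ into $O(km\alpha(n)\log^2 n)=O(m\alpha(n)\log^3 n/\varepsilon^2)$, while the $km\alpha(n)\log n$ term is already $O(m\alpha(n)\log^2 n/\varepsilon^2)$. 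Adding the two parts yields the claimed total $O(m\alpha(n)\log^3 n/\varepsilon^2)$. Everything else --- the $O(kn)$ size bounds, the per-operation costs, and the $O(\log n)$ rebuild count --- is quoted from earlier results or routine, so the crux is genuinely the ``a rebuild can only happen once the graph is dense enough to pay for it'' observation that lets a nominally expensive rebuild be charged to edges already present.
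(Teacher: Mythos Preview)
Your proof is correct and follows essentially the same structure as the paper's: correctness via the preceding lemmas and a union bound over $O(n^2)$ graphs, space by summing the footprints of the maintained structures, constant query time via \textsc{Query-Limited}, and total update time by separating the \textsc{NI-Sparsifier} work from the \textsc{Lim}$(k)$ work and multiplying the latter by the $O(\log n)$ rebuild count.

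The one genuine difference is in how you bound the cost of \textsc{Initialize-Limited}. You use the tight bound $|H|=O(kn)$ (since $H=F(p)\subseteq F$ is a union of $k+1$ forests), which forces you into the ``delicate point'' of arguing that a rebuild only happens once $\lambda(G)\ge k$ and hence $m\ge kn/2$, so that $k^2 n=O(km)$. The paper sidesteps this entirely with the looser but simpler observation that $H$ is always a subgraph of $G$, hence $|H|\le m$; then each \textsc{Initialize-Limited} call costs $O(km\alpha(n)\log n)=O(m\alpha(n)\log^2 n/\varepsilon^2)$ directly by Corollary~\ref{cor: ExactPolyLog}, and summing over $O(\log n)$ rebuilds gives the claimed bound without any density argument. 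Your route is not wrong, just more work than needed; what you flag as the crux of the proof is in fact avoidable. On the other hand, you are more explicit than the paper about the $M_j$ edges processed by \textsc{Insert-Limited} between rebuilds, which the paper handles only implicitly.
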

\begin{proof}
The space requirement is $O(n \log n/ \varepsilon^{2})$ since at any point of time, the algorithm keeps $H$, $F^{w}$,
\textsc{Lim}($k$), and \textsc{NI-Sparsifier} ($k)$, each of size at most $O(n \log n/ \varepsilon^{2})$ (Corollary \ref{cor: ExactPolyLog} and Lemma \ref{lemma: NI}).

When Algorithm \ref{algo: SmallSpaceMinCut} executes a \texttt{Rebuild Step}, only the \textsc{Lim}($k)$ data-structure is rebuilt, but not \textsc{NI-Sparsifier}($k$). During the whole algorithm $m$ \textsc{Insert-NI} operations are
performed. Thus, by Lemma \ref{lemma: NI},  the total time for all operations involving $\textsc{NI-Sparsifier}(k)$ is $O(m\log^2 n / \varepsilon^{2})$.

It remains to analyze Steps $2$ and $3$. 
In Step 2, \textsc{Initialize-Limited}$(H,k)$ takes at most $O(m \alpha(n)\log^{2} n / \varepsilon^{2})$ total time (Corollary \ref{cor: ExactPolyLog}). 
The running time of Step $3$ is $O(m)$ as well. Since the number of \texttt{Rebuild Steps} is at most $O(\log n)$,
it follows that the total time for all
\textsc{Initialize-Limited}$(H,k)$ calls in Steps $2$ and the total time of Step $3$ throughout the execution of the algorithm is $O(m \alpha(n)\log^{3} n / \varepsilon^{2})$.

We are left with analyzing the remaining part of Step 2. Each query operation executes one \textsc{Query-Limited}() operation, which takes constant time.
Each insertion executes one \textsc{Insert-NI}($e,p_e$) operation, which takes amortized time
$O(\log^2n/ \varepsilon)$. We maintain the edges of $F^w$ in a balanced binary tree so that each insertion and deletion takes $O(\log n)$ time.
As there are $m$ edge insertions the  remaining part of Step 2 takes total time
$O(m \log^{2} n / \varepsilon^{2})$. Combining the above bounds gives the theorem. \end{proof}

\paragraph*{Acknowledgements}

The research leading to these results has received funding from the European Research Council under the European Union's 7th Framework Programme (FP/2007-2013) / ERC Grant Agreement no.~340506 for M. Henzinger. M. Thorup's research is partly supported by Advanced Grant DFF-0602-02499B from the Danish Council for Independent Research under the Sapere Aude research career programme. This work was done in part while M. Henzinger and M. Thorup were visiting the Simons Institute for the Theory of Computing.

\bibliographystyle{plainurl}
\bibliography{literature}

\begin{thebibliography}{10}

\bibitem{abboud}
Amir Abboud and Virginia~Vassilevska Williams.
\newblock Popular conjectures imply strong lower bounds for dynamic problems.
\newblock In {\em Proc. of the 55th FOCS}, pages 434--443. IEEE, 2014.

\bibitem{AhnG09}
Kook~Jin Ahn and Sudipto Guha.
\newblock Graph sparsification in the semi-streaming model.
\newblock In {\em Proc. of the 36th ICALP}, pages 328--338, 2009.

\bibitem{AhnGM12}
Kook~Jin Ahn, Sudipto Guha, and Andrew McGregor.
\newblock Graph sketches: sparsification, spanners, and subgraphs.
\newblock In {\em Proc. of the 32nd PODS}, pages 5--14, 2012.

\bibitem{BenczurK15}
Andr{\'{a}}s~A. Bencz{\'{u}}r and David~R. Karger.
\newblock Randomized approximation schemes for cuts and flows in capacitated
  graphs.
\newblock {\em {SIAM} J. Comput.}, 44(2):290--319, 2015.

\bibitem{sayan}
Sayan Bhattacharya, Monika Henzinger, Danupon Nanongkai, and Charalampos~E.
  Tsourakakis.
\newblock Space- and time-efficient algorithm for maintaining dense subgraphs
  on one-pass dynamic streams.
\newblock In {\em Proc. of the 47th STOC}, pages 173--182, 2015.

\bibitem{Cormen}
Thomas~H. Cormen, Charles~E. Leiserson, Ronald~L. Rivest, and Clifford Stein.
\newblock {\em Introduction to Algorithms {(3.} ed.)}.
\newblock {MIT} Press, 2009.

\bibitem{dinitz}
E.~A. Dinitz, A.~V. Karzanov, and M.~V. Lomonosov.
\newblock On the structure of a family of minimum weighted cuts in a graph.
\newblock {\em Studies in Discrete Optimization}, pages 290--306, 1976.

\bibitem{DinitzW98}
Yefim Dinitz and Jeffery Westbrook.
\newblock Maintaining the classes of 4-edge-connectivity in a graph on-line.
\newblock {\em Algorithmica}, 20(3):242--276, 1998.

\bibitem{Gabow91}
Harold~N. Gabow.
\newblock Applications of a poset representation to edge connectivity and graph
  rigidity.
\newblock In {\em Proc. of the 32nd FOCS}, pages 812--821, 1991.

\bibitem{Gabow95}
Harold~N. Gabow.
\newblock A matroid approach to finding edge connectivity and packing
  arborescences.
\newblock {\em J. Comput. Syst. Sci.}, 50(2):259--273, 1995.

\bibitem{GalilI93}
Zvi Galil and Giuseppe~F. Italiano.
\newblock Maintaining the 3-edge-connected components of a graph on-line.
\newblock {\em {SIAM} J. Comput.}, 22(1):11--28, 1993.

\bibitem{GibbKKT15}
David Gibb, Bruce~M. Kapron, Valerie King, and Nolan Thorn.
\newblock Dynamic graph connectivity with improved worst case update time and
  sublinear space.
\newblock {\em CoRR}, abs/1509.06464, 2015.

\bibitem{henzinger15}
Monika Henzinger, Sebastian Krinninger, Danupon Nanongkai, and Thatchaphol
  Saranurak.
\newblock Unifying and strengthening hardness for dynamic problems via the
  online matrix-vector multiplication conjecture.
\newblock In {\em Proc. of the 47th STOC}, pages 21--30, 2015.

\bibitem{henzinger97}
Monika~Rauch Henzinger.
\newblock A static 2-approximation algorithm for vertex connectivity and
  incremental approximation algorithms for edge and vertex connectivity.
\newblock {\em Journal of Algorithms}, 24(1):194--220, 1997.

\bibitem{kargerPHD}
David Karger.
\newblock {\em Random Sampling in Graph Optimization Problems}.
\newblock PhD thesis, Stanford University, Stanford, 1994.

\bibitem{Karger94}
David~R. Karger.
\newblock Using randomized sparsification to approximate minimum cuts.
\newblock In {\em Proceedings of the Fifth Annual {ACM-SIAM} Symposium on
  Discrete Algorithms. 23-25 January 1994, Arlington, Virginia.}, pages
  424--432, 1994.

\bibitem{Karger99}
David~R. Karger.
\newblock Random sampling in cut, flow, and network design problems.
\newblock {\em Math. Oper. Res.}, 24(2):383--413, 1999.

\bibitem{kargermin}
David~R. Karger.
\newblock Minimum cuts in near-linear time.
\newblock {\em J. {ACM}}, 47(1):46--76, 2000.

\bibitem{thorup}
Ken{-}ichi Kawarabayashi and Mikkel Thorup.
\newblock Deterministic global minimum cut of a simple graph in near-linear
  time.
\newblock In {\em Proc. of the 47th STOC}, pages 665--674, 2015.

\bibitem{KelnerL13}
Jonathan~A. Kelner and Alex Levin.
\newblock Spectral sparsification in the semi-streaming setting.
\newblock {\em Theory Comput. Syst.}, 53(2):243--262, 2013.

\bibitem{LackiS11}
Jakub Lacki and Piotr Sankowski.
\newblock Min-cuts and shortest cycles in planar graphs in {$O(n \log \log n)$}
  time.
\newblock In {\em Proc. of the 19th ESA}, pages 155--166, 2011.

\bibitem{menger}
Karl Menger.
\newblock Zur allgemeinen kurventheorie.
\newblock {\em Fundamenta Mathematicae}, 1(10):96--115, 1927.

\bibitem{NagamochiI92}
Hiroshi Nagamochi and Toshihide Ibaraki.
\newblock A linear-time algorithm for finding a sparse k-connected spanning
  subgraph of a k-connected graph.
\newblock {\em Algorithmica}, 7(5{\&}6):583--596, 1992.

\bibitem{NagamochiI}
Hiroshi Nagamochi and Toshihide Ibaraki.
\newblock {\em Algorithmic Aspects of Graph Connectivity}.
\newblock Cambridge University Press, New York, NY, USA, 1 edition, 2008.

\bibitem{DS16}
Danupon Nanongkai and Thatchaphol Saranurak.
\newblock Dynamic cut oracle.
\newblock under submission, 2016.

\bibitem{Poutre00}
Johannes A.~La Poutr{\'{e}}.
\newblock Maintenance of 2- and 3-edge-connected components of graphs {II}.
\newblock {\em {SIAM} J. Comput.}, 29(5):1521--1549, 2000.

\bibitem{SleatorT83}
Daniel~Dominic Sleator and Robert~Endre Tarjan.
\newblock A data structure for dynamic trees.
\newblock {\em J. Comput. Syst. Sci.}, 26(3):362--391, 1983.

\bibitem{fullythorup}
Mikkel Thorup.
\newblock Fully-dynamic min-cut.
\newblock {\em Combinatorica}, 27(1):91--127, 2007.

\bibitem{thorupkarger}
Mikkel Thorup and David~R Karger.
\newblock Dynamic graph algorithms with applications.
\newblock In {\em Algorithm Theory-SWAT 2000}, pages 1--9. Springer, 2000.

\end{thebibliography}

\appendix

\section{Missing proofs in Section \ref{sec: sparseCertificates}} 
Next we show a proof for Lemma \ref{lemm: Nagamochi}. The arguments closely follow the work of Nagamochi and Ibaraki~\cite{NagamochiI}. We first present the following helpful lemma. 

\begin{lemma} \label{lemm: sparseHelpful}
Let $\mathcal{F} = (F_1,\ldots,F_m)$ be an \emph{msfd} of order $m$ of a graph $G=(V,E)$. Then for any edge $(u,v) \in F_j$
and any $i \le j$, it holds that $\lambda(u,v,\bigcup_{l \leq i}F_l) \geq i$.
\end{lemma}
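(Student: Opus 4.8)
The plan is to reduce the local–connectivity bound to a statement about each individual forest of the msfd and then invoke Menger's theorem. Concretely, I claim it suffices to show that for every $l$ with $1 \le l \le j$, the endpoints $u$ and $v$ lie in the same tree of $F_l$. Indeed, if this holds, then for any $i \le j$ each of the forests $F_1, \dots, F_i$ contains a (simple) $u$--$v$ path; since the forests of an msfd are pairwise edge-disjoint, these are $i$ edge-disjoint $u$--$v$ paths inside $\bigcup_{l \le i} F_l$, and Menger's theorem (as quoted in the Preliminary section) yields $\lambda(u,v,\bigcup_{l \le i} F_l) \ge i$.

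It then remains to prove the claim. Write $G_l = G \setminus (F_1 \cup \dots \cup F_{l-1})$, so that by definition $F_l$ is a maximal spanning forest of $G_l$; in particular the vertex sets of the trees of $F_l$ are exactly the connected components of $G_l$. For $l = j$ the claim is immediate, since $(u,v)$ is itself an edge of $F_j$. For $l < j$, note that $F_1 \cup \dots \cup F_{l-1} \subseteq F_1 \cup \dots \cup F_{j-1}$, hence $G_j \subseteq G_l$, and therefore the edge $(u,v) \in F_j \subseteq G_j$ is also an edge of $G_l$. Consequently $u$ and $v$ belong to the same connected component of $G_l$, and thus to the same tree of $F_l$, which proves the claim.

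The argument is essentially bookkeeping once the right reduction is in place; the only point that requires a little care is the observation that a maximal spanning forest has the same connected components as the graph it spans, which is precisely what lets us pass from ``$(u,v)$ is an edge of $G_l$'' to ``$u$ and $v$ are connected in $F_l$.'' (Equivalently: since $F_l$ and $F_j$ are edge-disjoint we have $(u,v) \notin F_l$, so by maximality adding $(u,v)$ to $F_l$ would close a cycle, again forcing a $u$--$v$ path within $F_l$.) I do not expect any step here to pose a genuine obstacle.
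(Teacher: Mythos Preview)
Your proof is correct and follows essentially the same approach as the paper: both arguments show that each forest $F_l$ with $l \le j$ contains a $u$--$v$ path by exploiting that $(u,v)$ is an edge of $G_l \setminus F_l$ together with the maximality of $F_l$, and then conclude via edge-disjointness of the forests. Your parenthetical ``equivalently'' argument is in fact exactly the paper's proof, and your primary phrasing via connected components of $G_l$ is a trivially equivalent restatement; if anything, your treatment of the case $l=j$ is slightly more uniform than the paper's, which handles $i<j$ and $i=j$ separately.
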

\begin{proof}
Fix some edge $e=(u,v) \in F_j$. We first argue that for each $i=1,\ldots,j-1$, the forest $(V,F_i)$ contains some $(u,v)$-path. Indeed, by the maximality of the forest $(V,F_i)$, the graph $(V,F_i \cup \{e\})$ must have some cycle $C$ that contains $e$. Thus, $P=C \setminus e$ is the $(u,v)$-path in the forest $(V,F_i)$. It follows that $(V,\bigcup_{l \leq i}F_l)$ has $i$ edge-disjoint paths. Next, observe that $G_j=(V,\bigcup_{l \leq j}F_l)$ has $j$ edge-disjoint paths, namely the $j-1$ edge disjoint paths
in $G_{j-1}$ (which does not contain the edge $(u,v)$) and
the 1-edge path consisting of  the edge $(u,v)$. Hence, $\lambda(u,v,\bigcup_{l \leq i}F_l) \geq i$,
\end{proof}

\begin{pfof}{Lemma \ref{lemm: Nagamochi}} Assume that $\lambda(S,G) \leq k-1$. Then by definition of $G_k$, we know that $G_k$ preserves any cut $S$ of size up to $k$. Thus $\lambda(S,G_k) = \lambda(S,G)$. 

For the other case, $\lambda(S,G) \geq k$ and assume that $\lambda(S,G_k) < \lambda(S,G)$ (otherwise the lemma follows). Then there is an edge $e=(u,v) \in E_G(S,V \setminus S) \setminus E_{G_k}(S, V \setminus S)$. Since $e \not \in \bigcup_{i \leq k} F_i$, this means that $e$ belongs to some forest $F_j$ with $j > k$. By Lemma \ref{lemm: sparseHelpful}, we have that $\lambda(u,v,G_k) \geq k$. Since $(S,V \setminus S)$ separates $u$ and $v$ in $G_k$, it follows that $\lambda(S,G_k) = |E_{G_k}(S, V \setminus S)| \geq \lambda (u,v,G_k) \geq k$. 
\end{pfof}

\end{document}